\newcommand{\thickhline}{%
    \noalign {\ifnum 0=`}\fi \hrule height 1.5pt
    \futurelet \reserved@a \@xhline
}
\definecolor{mygreen}{RGB}{20,120,60}
\date{}
\title{ Max-Weight Online Stochastic Matching: \\ Improved Approximations Against the Online Benchmark
}
\author{
Mark Braverman\thanks{Research supported in part by the NSF Alan T. Waterman Award, Grant
No. 1933331, a Packard Fellowship in Science and Engineering, and the
Simons Collaboration on Algorithms and Geometry.}\\
{\em Princeton University}
\and 
Mahsa Derakhshan \\
{\em UC Berkeley}
\and 
Antonio Molina Lovett\\
{\em Princeton University}
}
\newcommand{\E}[0]{\ensuremath{\mathbb{E}}}
\newcommand{\opt}[0]{\ensuremath{\textsc{OPT}_{\text{on}}}}
\newcommand{\lpopt}[0]{\ensuremath{\textsc{OPT}_{\text{LP}}}}
\newcommand{\alg}{\mathcal{WM}}
\newcommand{\inm}[1]{\ensuremath{Z}}
\renewcommand{\comment}[1]{}
\DeclarePairedDelimiter{\abs}{\lvert}{\rvert}
\DeclarePairedDelimiter{\p}{\lparen}{\rparen}
\DeclarePairedDelimiter{\s}{\lbrack}{\rbrack}
\DeclarePairedDelimiter{\cu}{\lbrace}{\rbrace}
\DeclarePairedDelimiter{\clop}{\lbrack}{\rparen}
\newcommand{\upgreek}[1]{{\mathrm #1}}
\newcommand{\Alpha}{\upgreek A}
\newcommand{\Beta}{\upgreek B}
\newcommand{\Epsilon}{\upgreek E}
\newcommand{\Zeta}{\upgreek Z}
\newcommand{\Eta}{\upgreek H}
\newcommand{\Iota}{\upgreek I}
\newcommand{\Kappa}{\upgreek K}
\newcommand{\Mu}{\upgreek M}
\newcommand{\Nu}{\upgreek N}
\newcommand{\omnicron}{o}
\newcommand{\Omnicron}{\upgreek O}
\newcommand{\Rho}{\upgreek P}
\newcommand{\Tau}{\upgreek T}
\newcommand{\Chi}{\upgreek X}
\newcommand{\sm}{\setminus}
\newcommand{\dc}{\dotsc}
\DeclareMathOperator*{\argmax}{arg\,max}
\renewcommand{\b}[1]{\ensuremath{\bm{\mathrm{#1}}}}
\renewcommand{\epsilon}[0]{\ensuremath{\varepsilon}}
\let\originalleft\left
\let\originalright\right
\renewcommand{\left}{\mathopen{}\mathclose\bgroup\originalleft}
\renewcommand{\right}{\aftergroup\egroup\originalright}
\newtheorem{theorem}{Theorem}
\newtheorem{lemma}{Lemma}[section]
\newtheorem{observation}[lemma]{Observation}
\def\thm@space@setup{%
  \thm@preskip= 0.2cm
  \thm@postskip=\thm@preskip 
}
\definecolor{mygreen}{RGB}{20,155,20}
\definecolor{myred}{RGB}{195,20,20}
\definecolor{linkcolor}{RGB}{0,0,230}
\definecolor{mylightgray}{RGB}{230,230,230}
\definecolor{verylightgray}{RGB}{240,240,240}
\definecolor{commentcolor}{RGB}{120,120,120}
\definecolor{mulberry}{RGB}{197, 75, 140}
\newcommand{\mc}[1]{\ensuremath{\mathcal{#1}}}
\newcounter{myalgctr}
\newenvironment{tbox}{
\par\addvspace{0.2cm}
\begin{tcolorbox}[width=\textwidth,
                  enhanced,
                  boxsep=2pt,
                  left=1pt,
                  right=1pt,
                  top=4pt,
                  boxrule=1pt,
                  arc=0pt,
                  colback=white,
                  colframe=black,
                  unbreakable
                  ]
}{
\end{tcolorbox}
}
\newenvironment{tboxh}{
\par\addvspace{0.2cm}
\begin{tcolorbox}[width=\textwidth,
                  enhanced,
                  boxsep=2pt,
                  left=1pt,
                  right=1pt,
                  top=4pt,
                  boxrule=1pt,
                  arc=0pt,
                  colback=white,
                  colframe=black,
                  unbreakable,
                  float=t
                  ]
}{
\end{tcolorbox}
}
\newenvironment{graytbox}{
\par\addvspace{0.1cm}
\begin{tcolorbox}[width=\textwidth,
                  enhanced,
                  frame hidden,
                  boxsep=5pt,
                  left=1pt,
                  right=1pt,
                  top=2pt,
                  bottom=2pt,
                  boxrule=1pt,
                  arc=0pt,
                  colback=mylightgray,
                  colframe=black,
                  breakable
                  ]
}{
\end{tcolorbox}
}
\newcommand{\tboxhrule}[0]{\vspace{0.1cm} \hrule \vspace{0.2cm}}
\newenvironment{titledtbox}[1]{\begin{tbox}#1 \tboxhrule}{\end{tbox}}
\newenvironment{titledtboxh}[1]{\begin{tboxh}#1 \tboxhrule}{\end{tboxh}}
\newenvironment{tboxalg}[2][]{\refstepcounter{myalgctr}\begin{titledtbox}{\textbf{Algorithm \themyalgctr}#1\textbf{.} #2}}{\end{titledtbox}}
\begin{document}
\maketitle

\thispagestyle{empty}
In this paper, we study max-weight stochastic matchings on online bipartite graphs under both vertex and edge arrivals. We focus on designing polynomial time approximation algorithms with respect to the online benchmark, which was first considered by Papadimitriou, Pollner, Saberi, and Wajc [EC’21]. 

In the vertex arrival version of the problem, the goal is to find an approximate max-weight matching of a given bipartite graph when the vertices in one part of the graph arrive online in a fixed order with  independent chances of failure. Whenever a vertex arrives we should decide, irrevocably, whether to match it with one of its unmatched neighbors or leave it unmatched forever.  There has been a long line of work designing approximation algorithms for different variants of this problem with respect to the offline benchmark (prophet). 
Papadimitriou et al., however, propose the alternative {\em online} benchmark and show that considering this new benchmark allows them to improve the $0.5$ approximation ratio, which is the best ratio achievable with respect to the offline benchmark. They provide a $0.51$-approximation algorithm which was later improved to $0.526$ by Saberi and Wajc [ICALP'21]. The main contribution of this paper is designing a simple algorithm with a significantly improved approximation ratio of $(1-1/e)$ for this problem.

We also consider the edge arrival version in which, instead of vertices, edges of the graph arrive in an online fashion with independent chances of failure. Designing approximation algorithms for this problem has also been studied extensively with the best approximation ratio being $0.337$ with respect to the offline benchmark. This paper, however, is the first to consider the online benchmark for the edge arrival version of the problem. For this problem, we provide a simple algorithm with an approximation ratio of $0.5$ with respect to the online benchmark.

{

\hypersetup{
     linkcolor= black
}

\thispagestyle{empty}
\clearpage
}

\setcounter{page}{1}

\section{Introduction}
%
%

The extensive literature on online Bayesian selection algorithms mainly focuses on the {\em competitive ratio}. That is, how well the algorithm performs against the optimal {\em offline} solution. Competing against such a strong benchmark often leads to pessimistic outcomes. For example, it is well-known that even for the single item version of the online Bayesian selection problem, the prophet inequality problem, no online algorithm can be better than $\sfrac{1}{2}$-competitive.

Another natural objective would be to compete with the best {\em online} solution. For many variants of online Bayesian selection problems (when the input is generated stochastically) one can write a  dynamic program that makes the best decision at any point --- hence the objective function is well defined. However, these algorithms are rarely computationally efficient. Indeed, Papadimitriou, Pollner, Saberi, and Wajc~\cite{DBLP:conf/sigecom/PapadimitriouPS21} show that a variant of the {\em online stochastic matching problem} is PSPACE-hard to approximate within some small constant.  Thus, they initiate studying approximation algorithms for this problem with respect to the online benchmark. That is the solution found by an algorithm that has unlimited computational power, but is unaware of the part of the input that has not arrived.


In this paper, we study max-weight stochastic matchings on online bipartite graphs under both vertex and edge arrivals. Our main focus is on designing polynomial time approximation algorithms with respect to the online benchmark.

\paragraph{\textbf{The Vertex arrival model.}}The goal in this problem is to find a large-weight matching of a  bipartite graph when vertices in one part of the graph are online, arriving in a fixed order, each with an independent chance of failure. The vertices in the other part are present from the beginning thus referred to as the offline vertices. The graph, the arrival order of the online vertices, and their chances of failure are known from the beginning. The only unknown is whether a vertex actually arrives or if it fails. If a vertex does not arrive (i.e., fails), we do nothing about it. Otherwise, we either match it irrevocably to one of its unmatched neighbors or leave it unmatched forever. Papadimitriou et al.~\cite{DBLP:conf/sigecom/PapadimitriouPS21}  refer to this problem as the RideHail problem due to its applications in ride hailing. However, it also models scenarios in other types of matching markets such as labor markets, online advertising, etc.

There has been a long line of work designing approximation algorithms for this problem (and its variants) with respect to the offline benchmark (see~\cite{DBLP:conf/stoc/KarpVV90, DBLP:conf/sigecom/AlaeiHL12, DBLP:conf/focs/GamlathKMSW19, DBLP:journals/geb/KleinbergW19, DBLP:conf/sigecom/EzraFGT20, DBLP:conf/sigecom/PapadimitriouPS21}  and the references within.) The best known algorithm with respect to this benchmark achieves a tight $1/2$ approximation ratio~\cite{DBLP:conf/soda/FeldmanGL15}.  
In their recent work, Papadimitriou et al.~\cite{DBLP:conf/sigecom/PapadimitriouPS21} show that this ratio can be improved to $0.51$ if one considers the online benchmark instead. The online benchmark here is defined as a max-weight matching found by an algorithm that has unlimited computational power but does not know the arrival/failure of the future vertices. This approximation ratio was later improved to $0.526$ using a machinery developed by Saberi and Wajc~\cite{DBLP:conf/icalp/SaberiW21} for an online edge coloring problem. In this work, we design a simple algorithm with a significantly improved approximation ratio of $(1-1/e)\approx0.632$ with respect to the online benchmark.

\vspace{4mm}
\begin{graytbox}
\noindent \textbf{Result 1.} (See Theorem~\ref{thm:main1}) There exists a polynomial time algorithm for the online bipartite stochastic matching problem under (one-sided) vertex arrivals which finds a matching of weight at least a $(1-1/e)$ fraction of the one found by the best online algorithm.
\end{graytbox}

Similar to Papadimitriou et al.~\cite{DBLP:conf/sigecom/PapadimitriouPS21}, we also consider a more general version of the problem, which is also studied by~\cite{DBLP:conf/soda/FeldmanGL15, DBLP:conf/sigecom/EzraFGT20, DBLP:journals/siamcomp/DuttingFKL20}, where upon arrival of a vertex, weights of its edges are drawn from a known joint distribution. However, weights of edges incident to  different online vertices are still independent. In Section~\ref{section:general} we explain how our algorithm and analysis can be extended to get the same approximation ratio of $(1-1/e)$ for this more general problem.

\paragraph{\textbf{The Edge arrival model.}} The only difference between this problem and the vertex arrival version is that here, instead of the vertices, edges are online. Similarly, the goal in this problem is to find a large-weight matching of a bipartite graph when edges are online, arriving in a fixed order, each with an independent chance of failure. The graph, the arrival order of the edges and their chances of failure are known from the beginning. The only unknown is whether an edge actually arrives or if it fails. If an edge does not arrive (i.e., fails), we do nothing about it. Otherwise, we decide irrevocably whether to add it to our matching or not. See~\cite{DBLP:conf/icalp/GravinTW21} for potential applications of this problem.

Designing approximation algorithms for the edge arrival version of the problem has also been studied extensively (see \cite{DBLP:journals/algorithmica/BuchbinderST19, DBLP:conf/ec/GravinW19, DBLP:conf/sigecom/EzraFGT20,  DBLP:journals/siamcomp/FeldmanSZ21} and the references within), with the best approximation ratio being $0.337$ with respect to the offline benchmark~\cite{DBLP:conf/sigecom/EzraFGT20}. It is also known that with respect to this benchmark it is not possible to achieve an approximation ratio better than $\sfrac{4}{9}$~\cite{DBLP:conf/ec/GravinW19}. This paper, however, is the first to consider the online benchmark for the edge arrival version of the problem. For this problem, we provide a simple algorithm with an approximation ratio of $0.5$ with respect to the online benchmark. 
\vspace{4mm}
\begin{graytbox}
\noindent \textbf{Result 2.} (See Theorem~\ref{thm:main2}) There exists a polynomial time algorithm for the online bipartite stochastic matching problem under edge arrivals which finds a matching of weight at least a $1/2$ fraction of the one found by the best online algorithm.
\end{graytbox}

\subsection{Our Techniques}
For both vertex and edge arrival versions of the problem, we design LP-based algorithms consisting of an LP and a rounding procedure. The LP which we borrow from~\cite{torrico2017dynamic} outputs a fractional solution \b{x}, where for any edge $e$, $x_e$ can be interpreted as the probability of this edge joining the matching.  Papadimitriou et al.~\cite{DBLP:conf/sigecom/PapadimitriouPS21} also use the same LP and furtur give a lower-bound of $0.875$ for its integrality gap.  Other than the basic matching constraints, the LP has an additional natural constraint which is crucial for separating the online and offline solutions. This constraint relies on the fact that whether a vertex/edge fails or not is independent of any decision made by the algorithm for the vertices/edges arriving before that. Thus, for instance, in the vertex arrival version, this constraint states that for any edge $e=(v, u)$, the offline vertex $u$ should be unmatched with probability at least $x_e/p_v$ before the arrival of vertex $v$. Here, $p_v$ denotes the probability of $v$ not failing (i.e. arriving).  We explain this in more detail in Section~\ref{section:alg}. In this section, we focus on discussing our rounding procedure for the vertex arrival model to present the flavor of our work.

\paragraph{\textbf{Our rounding procedure.}} To round the solution of the LP, we design a simple random rounding procedure.  Upon arrival of a vertex $v$, it may receive matching proposals from its unmatched neighbors. If it receives any proposals, it accepts the best one (the one with the largest weight) and joins the matching. Otherwise, it remains unmatched. The process of sending proposals is as follows: When $v$ arrives, any of its unmatched neighbors decides independently at random whether to send a proposal. The probability of sending these proposals is set in a way that for any edge, the probability of it being proposed is lower-bounded by $x_e$. This is achievable in particular due to the LP constraint discussed above.

 To be able to highlight the properties of our algorithm which allow us to improve the algorithm proposed by Papadimitriou et al.~\cite{DBLP:conf/sigecom/PapadimitriouPS21}, we will give a brief overview of their algorithm below.

\paragraph{\textbf{Algorithm proposed by Papadimitriou et al.}} Let us emphasize that this is just a brief and paraphrased overview of the algorithm proposed in~\cite{DBLP:conf/sigecom/PapadimitriouPS21} which we include for the sake of comparison. They start with the same LP that we use. However, their rounding procedure is different.  Upon arrival of an online vertex $v_t$, it picks one of its neighbors randomly proportional to the probabilities given by the LP and sends a proposal to it. If the neighbor is unmatched it accepts the proposal with some probability and matches with $v_t$. These probabilities are set in a way that each edge joins the matching with probability at least $0.51x_e$. However, some vertices are not able to satisfy this for their edges by sending only a single proposal. The algorithm gives such vertices a second chance to send another proposal if their first proposal is not accepted, and this allows them  to guarantee a matching probability of $0.51x_e$ for all the edges. In the rest of the paper, we will refer to this algorithm as PPSW (the authors’ initials.)
\\

As the first difference, in PPSW the offline vertices receive the proposals and need to decide whether to accept a proposal without knowing their future proposals. In our algorithm however, since  the online vertices are the ones receiving the proposals they can make a decision while knowing all their options. This is particularly helpful since edges are weighted and an online vertex has the option of picking the one with the highest weight. With this advantage, however, comes a new challenge. We cannot guarantee that all the edges will join the matching with a large probability. Indeed, some low-weight edges may have a very small probability. To overcome this, instead of analyzing the rounding loss for any edge, we lower-bound the loss imposed on any online vertex due to the rounding procedure. 

The second difference is in the number of proposals a vertex can send. We do not limit the number of proposals a vertex can send. Indeed, a vertex can send proposals as long as it is unmatched. PPSW on the other hand, imposes the limit of two proposal per any online vertex. This is due to the way their analysis works. The main meat of their analysis is upper-bounding the positive correlation between the matching status of the offline vertices, and allowing the vertices to send many proposals worsens the correlation. Let us first explain why absence of positive correlation is desirable.  A key part of our analysis is proving that our algorithm satisfies the following property: whenever an online vertex $v$ arrives, the probability of it {\em not} receiving any proposals from any subset $S$ of its neighbors is at most \begin{align}  \label{eq:kjfrej} \prod_{u\in S} (1-x_{(v,u)}/p_v).\end{align} 
As mentioned above, the probability of $v$ receiving a proposal from any of its neighbors $u$ is at least $x_{(v,u)}$. Since $v$ arrives with probability $p_v$, the probability of it {\em not} receiving a proposal from a neighbor $u$ is at most $ (1-x_{(v,u)}/p_v)$. Thus, property~\eqref{eq:kjfrej} follows directly if we were allowed to assume that the matching status of the vertices in $S$ are independent when $v$ arrives. It is not complicated to show that the same holds if they are not positively correlated. Unfortunately though, trying to prove this key property through positive correlation fails as we show via an example (See Section~\ref{section:tight}.) that  these events can indeed be negatively correlated. However,  we are still able to prove the existence of this property via a different method without concerning ourselves with the correlation between the matching status of the offline vertices. We even show that our analysis is tight for an instance of the problem. Interestingly, in this instance our algorithm does not cause any positive correlation between the matching status of the offline vertices.
(See Section~\ref{section:correlation}.) This all means that  positive correlation by itself is not the enemy. It only hurts us if it decreases the probability of a vertex receiving at least one proposal in comparison to the case of its neighbors being independent. Our approach is to carefully lower-bound the probability of this event using simple mathematical tools. See Lemma~\ref{lemma:no-proposals-bound} for more details.

\subsection{Further Related work}

As we mentioned before, most of the literature  on online Bayesian selection focuses on designing algorithms with respect to the offline benchmark which is often referred to as the prophet. 
It would be impossible to do justice to this extensive literature in this amount of space, thus we just briefly outline some of the most relevant works here. The study of prophet inequality problem, the single item version of the online Bayesian selection problem, was  initiated by Krengel and Sucheston~\cite{krengel1978semiamarts} who give an algorithm with competitive ratio of $1/2$. Their seminal work was a starting point for studying more general versions of online Bayesian selection problems.  The ones most related to multi-item prophet inequalities under matroid constraints~\cite{DBLP:journals/geb/KleinbergW19}, stochastic matching under vertex arrivals~\cite{DBLP:conf/stoc/KarpVV90, DBLP:conf/sigecom/AlaeiHL12, DBLP:conf/focs/GamlathKMSW19, DBLP:journals/geb/KleinbergW19, DBLP:conf/sigecom/EzraFGT20, DBLP:conf/sigecom/PapadimitriouPS21} and stochastic matching under edge arrivals~\cite{DBLP:journals/algorithmica/BuchbinderST19, DBLP:conf/ec/GravinW19, DBLP:conf/sigecom/EzraFGT20,  DBLP:journals/siamcomp/FeldmanSZ21}. 

It is worth mentioning that the connection between prophet inequalities and algorithmic mechanism design first discovered by Hajiaghayi, Kleinberg and Sandholm~\cite{DBLP:conf/aaai/HajiaghayiKS07}, has played a significant role in motivating the study of approximation algorithms for these online Bayesian selection problems. For more detailed related work on the topic of prophet inequality and also its relations to algorithmic mechanism design see ~\cite{DBLP:journals/sigecom/Lucier17, DBLP:journals/sigecom/CorreaFHOV18}.

Max-weight matching on stochastic graphs has also been studied extensively under the query model. That is, similar to our model, there is an underlying stochastic graph, however, to know whether an edge exists it should be queried. Some works focus on having a small number of queries~\cite{DBLP:conf/soda/YamaguchiM18, DBLP:conf/sigecom/BehnezhadR18, DBLP:conf/stoc/BehnezhadDH20, DBLP:conf/focs/BehnezhadD20} while others require any queried and realized edge to join the matching~\cite{DBLP:conf/icalp/ChenIKMR09, DBLP:conf/icalp/CostelloTT12, DBLP:journals/algorithmica/BansalGLMNR12, DBLP:conf/soda/GamlathKS19}. The main application of these models is for environments with costly queries such as organ exchange markets.

\paragraph{\textbf{Paper organization.}} \noindent  The rest of the paper is organized as follows. In Section~\ref{section:pre}, we provide a formal definition of our problems and some notions that we will use throughout the paper. Section~\ref{sec:vtx} is about the vertex arrival model. In this section, we first  present the algorithm and its analysis in~\ref{section:alg} and ~\ref{sec:analysis} respectively. Later, in \ref{section:correlation}, we provide an example for which our algorithm causes positive correlation between the matching status of the offline vertices, and in~\ref{section:tight} we show that the analysis of our algorithm is tight. Further, in~\ref{section:general} we explain how our results can be extended to a more general version of the problem where the weights of the edges connected to any online vertex are drawn from a joint distribution. 
Finally, we discuss the edge arrival version of the problem in Section~\ref{sec:edge} with the algorithm and its analysis being in~\ref{sec:edge-algorithm} and~\ref{section:edge-analysis} respectively.

\section{Preliminaries}\label{section:pre}
We are given a bipartite graph $G=(A, B, E)$ and  a weight $w_e$ for each edge $e\in E$. In the vertex arrival model, we also have a probability $p_v$ for each $v\in A$ and a fixed order $(v_1, \dots, v_{|A|})$ over the vertices in $A$. Vertices in $B$ are initially present, but vertices in $A$ arrive online. At any time $t$, with probability $p_{v_t}$ vertex $v_t$ arrives (or is realized). If it does, we are allowed to match $v_t$ irrevocably to one of its unmatched neighbors or else commit to leaving it unmatched forever.
If it does not arrive, we do nothing at time $t$.  In the edge arrival model, similarly, we are also given a probability $p_e$ for each $e\in E$ and a fixed order $(e_1, \dots, e_{|E|})$ over the edges. At any time $t$, with probability $p_{e_t}$ edge $e_t$ arrives (or is realized). If it does, we should decide irrevocably whether to add it to our matching or lose it forever. Under both arrival models, the goal is to maximize the total weight of the edges we add to the matching. 

In this paper, our focus is on designing a polynomial time algorithms for the above problem under both arrival models. We say that an algorithm $\mc{A}$ is an $\alpha$-approximation if for any instance $\mc I$ of the problem, it satisfies
$$\E[\mc{A(I)}] \ge \alpha\E[\opt({\mathcal I})] ,$$
where $\opt$ is the optimal online algorithm. That is an algorithm that has unlimited computational power, but its knowledge about the arrival of future vertices/edges is the same as ours.

For ease of notation, for any pair of edges $e_1$, $e_2$, we say $e_1<e_2$ if $e_1$ arrives before $e_2$. We also use $e_1 < t$ to mean $e_1$ arrives before time $t$. (Note that in the vertex arrival model, an edge arrives whenever its online end-point arrives.) Also, when it is clear from context, we will use $t$ to refer to the vertex $v_t$ (or edge $e_t$), arriving at time $t$.
Finally, we write our edges as ordered pairs, meaning that for any edge $(v,u)\in E$ we always have $v\in A$ and $u\in B$.

\section{Vertex Arrivals}\label{sec:vtx}

\subsection{The Algorithm}\label{section:alg}

We begin by writing a linear program that attempts to model the optimal (omnipotent) online algorithm's behavior. This LP is also used by Papadimitriou et al., however, for the sake of self-containment we explain it in detail here.
For any edge $e\in E$, we have a variable $x_e$ which represents the probability of $e$ joining the matching in $\opt$. Here, the randomness can be over both the stochastic arrivals of the vertices and any random decisions made by the algorithm.
We claim then that such $x_e$ are feasible for the following LP:
\begin{align} 
\max_{\mathbf{x}} \qquad &\sum_{e\in E} w_ex_e	\,,& \label{eq:LP-objfun} \\
\textrm{s.t.} \qquad & \sum_{e\ni v} x_e \le p_v &  \forall\, v\in A  \,, \label{eq:LP-topbound} \\
& \sum_{e\ni u} x_e \le 1 &  \forall\, u\in B  \,, \label{eq:LP-botbound} \\
& p_v\cdot (1-\sum_{\mathclap{e'\ni u, e'<e}} x_{e'}) \ge x_{e} &  \forall\, e=(v,u)\in E  \,, \label{eq:LP-main} \\
&x_e \ge 0& \forall e \in E \,. \label{eq:LP-nonneg}
\end{align}

The first two constraints (\ref{eq:LP-botbound} and \ref{eq:LP-topbound}) are standard matching constraints since each vertex can be incident to at most one edge in the matching, and each $v\in A$ is unmatched with probability at least $1-p_v$ (when it fails). 
Constraint~\ref{eq:LP-main} is however special to the online solution. It asserts that for any edge $(v_t, u)$ the probability of $u$ being unmatched  before time $t$ (the left-hand side) is at least $x_e/p_{v_t}$.
This is due to the fact that  arrival of vertex $v_t\in A$  is independent of whether $u$ is matched before time $t$. (All vertices arrive independently, and a non-omniscient algorithm must have made all matching decisions independently of future arrivals.) If this constraint is not satisfied then $u$ is unmatched with probability less that $x_e/p_{v_t}$. In this case, the probability that $v_t$ arrives and $u$ is still unmatched by time $t$ is less than $x_e$, contradicting the definition of $x_e$.

\begin{observation}[\cite{torrico2017dynamic}]\label{obs:optlp}
	Let \b{x} be an optimal solution of the LP. We have $\lpopt \geq \opt$ where $\lpopt = \sum_{e\in E} w_ex_e$.
\end{observation}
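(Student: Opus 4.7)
The plan is to construct an explicit feasible LP solution whose objective value equals $\opt$; since $\lpopt$ is the LP maximum over feasible solutions, the desired inequality $\lpopt \ge \opt$ follows immediately. Let $M_\star$ denote the (random) matching returned by the optimal online algorithm, and define $x_e^\star := \Pr[e \in M_\star]$ for every edge $e \in E$. By linearity of expectation, $\sum_{e \in E} w_e x_e^\star = \E[\text{weight of } M_\star] = \opt$, so the task reduces to verifying that $\mathbf{x}^\star$ satisfies constraints~(\ref{eq:LP-topbound})--(\ref{eq:LP-nonneg}).

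Three of the four constraints are essentially immediate. Non-negativity~(\ref{eq:LP-nonneg}) is built in since each $x_e^\star$ is a probability. Both degree constraints follow by observing that the events $\{e \in M_\star\}$ for edges sharing a common endpoint are mutually exclusive, since $M_\star$ is a matching. For $u \in B$ this yields $\sum_{e \ni u} x_e^\star = \Pr[u \text{ is matched by } M_\star] \leq 1$, giving~(\ref{eq:LP-botbound}). For $v \in A$, the additional observation that $v$ can be matched only if it arrives gives $\sum_{e \ni v} x_e^\star \leq \Pr[v \text{ arrives}] = p_v$, which is~(\ref{eq:LP-topbound}).

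The only step requiring real care is~(\ref{eq:LP-main}). Fix an edge $e = (v_t, u) \in E$. The event $\{e \in M_\star\}$ implies both that $v_t$ arrives at time $t$ and that $u$ is still unmatched at the beginning of time $t$. The key point---and the place where the online nature of the benchmark is essential---is that these two events are \emph{independent}: the arrival of $v_t$ is a fresh independent coin flipped at time $t$, whereas $u$'s matching status at the start of time $t$ is determined entirely by arrivals at times $1,\dots,t-1$ together with the (possibly randomized) decisions the online algorithm makes from the information available before time $t$. Hence
\[ x_e^\star \;\le\; p_{v_t} \cdot \Pr\!\left[u \text{ is unmatched at the start of time } t\right]. \]
Rewriting the second probability via disjointness of the events $\{e' \in M_\star\}$ over edges $e' \ni u$ with $e' < e$ gives $1 - \sum_{e' \ni u,\, e' < e} x_{e'}^\star$, and substituting yields exactly constraint~(\ref{eq:LP-main}). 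I expect this independence argument to be the only delicate point; the remaining verifications amount to routine bookkeeping.
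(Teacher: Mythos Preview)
Your proposal is correct and follows essentially the same approach as the paper: define $x_e^\star = \Pr[e \in M_\star]$ for the optimal online matching $M_\star$, verify feasibility of the LP constraints (with the independence of $v_t$'s arrival from the history before time $t$ being the key observation for constraint~(\ref{eq:LP-main})), and conclude $\lpopt \ge \opt$ by LP optimality. The paper presents this same argument in the discussion immediately preceding the observation rather than as a separate proof.
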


\subsection{The Rounding Procedure}
The next step of the algorithm is rounding the fractional solution of the LP. 
For that, we design a simple rounding procedure that given any optimal solution of the LP (which is a fractional matching), outputs an integral matching. Later we will prove that the output of this algorithm is a $(1-1/e)$-approximate solution. 

Our rounding procedure is very simple and natural. Whenever a vertex $v_t\in A$ arrives, we construct a random subset $P$ of its unmatched neighbors as potential matches. Any unmatched neighbors decide independently at random whether to send a matching proposal to $v_t$ and join subset $P$. 
 If $v_t$ receives at least one proposal (i.e., if $P$ is nonempty), it accepts the best one (the one with the largest weight) and joins the matching. Otherwise, it remains unmatched forever. In our algorithm, we set the probability of sending proposals in a way that for  any edge $e=(v_t, u)\in E$, it results in  
\begin{align*} \Pr[v_t \text{ receives a proposal from } u]\geq x_e. \end{align*} 
This is achievable thanks to the constraint~\ref{eq:LP-main} of the LP which separates the online and offline benchmarks. In other words, it is not possible to satisfy this inequality for any arbitrary fractional matching, and this is where we use the fact that we are competing with the best online algorithm. To be able to satisfy this property, whenever $v_t$ arrives and $u$ is unmatched we need  $u$ to send a proposal to $v_t$ with probability at least  $$\frac{x_e}{p_t \Pr[u \text{ is unmactched}]}.$$
This is of course achievable only if this number is not larger than one, which will be shown as a consequence of our analysis.
\vspace{3mm}

\begin{tboxalg}{Rounding Procedure}\label{alg}
\begin{algorithmic}[1]
\STATE Let $\b x$ be an optimal solution of the LP.
\STATE Let $M\gets \emptyset$ be a matching of $E$.
\FOR{$t\in |A|$}
\STATE $v\gets v_t$
\STATE Let set $N_v$ denote neighbors of vertex $v$ in graph $G$. 
\STATE $P\gets \emptyset$.
\STATE For any vertex $u\in N_v$, define $\alpha_{u}=\sum_{e\ni u, e<(v,u)} x_{e}.$ \label{alg:defalpha}
\STATE For any vertex $u\in N_v$, if $u$ is matched in $M$ and $x_{(v,u)}>0$, then with probability $\frac{x_{(v,u)}}{p_v(1-\alpha_u)}$ add edge  $(v,u)$ to set $P$ independently. \label{alg:samp-line}
\IF{$P$ is non-empty and $v_t$ is realized}
\STATE Add edge $\argmax_{e\in P} w_e$ to matching $M$.
\ENDIF
\ENDFOR
\STATE Return matching $M$. 
\end{algorithmic}
\end{tboxalg}

\vspace{2mm}

\subsection{The Analysis}\label{sec:analysis}
The purpose of this section is proving that Algorithm~\ref{alg} finds a $(1-1/e)$-approximate matching. Before proceeding with our analysis,  we need to define some notations. In the rest of the paper, we use $\alg$ to represent the weighted matching outputted by Algorithm~\ref{alg}. Moreover, for any vertex $v$, if it is matched in $\alg$, (i.e., $v\in \alg$), we use $\alg(v)$ to represent the weight of its matching edge in $\alg$. Note that $\alg$ and $\alg(v)$ are both random variables. For any vertex $u\in B$ and any time $t$, we define 
$$\alpha_{t,u}:= \sum_{e\ni u,e<(v_t,u)} x_{e}.$$
Finally, for a given subset of vertices $S\subset B$, we define $E^S_t$ to be the event in which all the vertices from $S$ are matched before time $t$, and $F^S_t$ to similarly be the event that all vertices in $S$ are still free (unmatched) just before time $t$.

In our calculations, we will make use of the following lemma. However, to preserve the flow of the paper, we defer its proof to Section~\ref{apx:omitted}.

\begin{lemma} 
  \label{lemma:incl-excl}
  Let $S\subset B$ be a set of vertices.
  Suppose we associate each vertex $u\in S$ with a number $w_u\in\mathbb R$. Then
  \begin{align*}
    \sum_{X\subset S}\Pr\s*{E^{S\sm X}_t\cap F^X_t}\prod_{u\in X}w_u
    &= \sum_{X\subset S}\Pr\s*{E^{S\sm X}_t}\p*{\prod_{u\in X}w_u}\prod_{u\in S\sm X}(1-w_u)
      .
  \end{align*}
\end{lemma}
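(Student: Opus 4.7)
The plan is to view the identity as a consequence of inclusion--exclusion. For each $u\in S$, let $M_u$ denote the event that $u$ is matched before time $t$, so $E^{S\setminus X}_t = \bigcap_{u\in S\setminus X} M_u$ and $F^X_t = \bigcap_{u\in X} M_u^c$. Standard inclusion--exclusion applied to the complementary events comprising $F^X_t$ then gives
$$\Pr\!\left[E^{S\setminus X}_t \cap F^X_t\right] \;=\; \sum_{Y \subset X} (-1)^{|Y|}\,\Pr\!\left[E^{(S\setminus X)\cup Y}_t\right],$$
since $S\setminus X$ and $Y\subset X$ are automatically disjoint.

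Next, I would substitute this expansion into the left-hand side and interchange the order of summation. Setting $A = S\setminus X$ and $B = Y$, the pair $(A,B)$ ranges over disjoint subsets of $S$, and it is natural to re-parametrize by $Z := A\cup B$ --- the set of vertices whose ``matched'' status is being recorded by the probability. For each fixed $Z\subset S$, the admissible pairs $(A,B)$ with $A\cup B=Z$ correspond to partitions of $Z$, and the factor $\prod_{u\in S\setminus A}w_u$ decomposes as $\prod_{u\in S\setminus Z}w_u\cdot\prod_{u\in B}w_u$. Everything depending on $B$ then collects into the inner sum $\sum_{B\subset Z}(-1)^{|B|}\prod_{u\in B}w_u = \prod_{u\in Z}(1-w_u)$ by the standard product-expansion identity.

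What remains is to relabel $X := S\setminus Z$ and verify that the resulting expression is exactly the right-hand side of the lemma, term by term. The argument uses no property of the algorithm or of the joint distribution of $(M_u)_{u\in S}$ beyond basic inclusion--exclusion; independence or correlation structure plays no role, and the identity is purely combinatorial. The only step requiring care is the bookkeeping during the change of variables, which I expect to be the main (but entirely mechanical) obstacle.
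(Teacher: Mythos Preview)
Your proposal is correct. The inclusion--exclusion identity you invoke,
\[
\Pr\!\bigl[E^{S\setminus X}_t\cap F^X_t\bigr]=\sum_{Y\subset X}(-1)^{|Y|}\Pr\!\bigl[E^{(S\setminus X)\cup Y}_t\bigr],
\]
follows from expanding the indicator $\prod_{u\in X}(1-\mathbf 1[M_u])$, and your subsequent re-indexing via $Z=A\cup B$ and the product identity $\sum_{B\subset Z}(-1)^{|B|}\prod_{u\in B}w_u=\prod_{u\in Z}(1-w_u)$ checks out line by line.

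Your route differs from the paper's. The paper works in the opposite direction: it starts from the right-hand side and replaces each $\Pr[E^{S\setminus X}_t]$ using the \emph{partition} identity
\[
\Pr\!\bigl[E^{S\setminus X}_t\bigr]=\sum_{Y\subset X}\Pr\!\bigl[E^{S\setminus Y}_t\cap F^Y_t\bigr],
\]
which holds because the events $\{E^{X\setminus Y}_t\cap F^Y_t:Y\subset X\}$ partition the sample space. After swapping sums, the remaining inner sum $\sum_{Y\subset X\subset S}\bigl(\prod_{u\in X\setminus Y}w_u\bigr)\prod_{u\in S\setminus X}(1-w_u)$ telescopes to $1$. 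So the paper never introduces alternating signs, while you do; conversely, you never need to recognize a partition of the probability space. Both arguments are purely combinatorial and of comparable length; yours is arguably the more mechanical of the two, while the paper's keeps the probabilistic picture in view throughout. Either is a perfectly valid proof of the lemma.
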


We can now begin our analysis with a crucial property of our algorithm. That is upper-bounding the probability of all the vertices in $S$ being matched before time $t$ for any $S\subset B$.
\begin{lemma}\label{lemma:corr}
At any time $t$, for any subset of vertices $S\subset B$, we have
\begin{align}\Pr[E^S_t]\leq \prod_{u\in S} \alpha_{t,u}. \label{eq:thm-corr-main}\end{align}
\end{lemma}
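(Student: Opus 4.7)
The plan is to induct on $t$. The base case $t=1$ is trivial: before any arrival nothing has been matched, so $\Pr[E^S_1]=0=\prod_{u\in S}\alpha_{1,u}$ for every nonempty $S$ (since $\alpha_{1,u}=0$), and both sides equal $1$ when $S=\emptyset$.

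For the inductive step, fix $S\subset B$ and partition $S=S_0\sqcup S_1$ where $S_1:=S\cap N_{v_t}$. At most one offline vertex can become matched at time $t$ --- the one (if any) that $v_t$ matches to, which must lie in $N_{v_t}$ --- so we have the disjoint decomposition
\begin{equation*}
\Pr[E^S_{t+1}]=\Pr[E^S_t]+\sum_{u\in S_1}\Pr\bigl[E^{S\setminus\{u\}}_t\cap F^{\{u\}}_t\cap\{u\text{ matched to }v_t\}\bigr].
\end{equation*}
Set $\beta_u:=x_{(v_t,u)}/(1-\alpha_{t,u})$. Conditional on $u$ being free at time $t$, the algorithm's time-$t$ coin flips (both $v_t$'s arrival and $u$'s independent choice to propose) are independent of the past, so --- bounding the event ``$u$ is the best proposer'' by $1$ --- we have $\Pr[u\text{ matched to }v_t\mid F^{\{u\}}_t,\text{history}]\le\beta_u$. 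Hence
\begin{equation*}
\Pr[E^S_{t+1}]\le\Pr[E^S_t]+\sum_{u\in S_1}\beta_u\,\Pr[E^{S\setminus\{u\}}_t\cap F^{\{u\}}_t].
\end{equation*}

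The key step is to recognize the right-hand side as the $|X|\le 1$ subsum of $\sum_{X\subset S_1}\Pr[E^{S\setminus X}_t\cap F^X_t]\prod_{u\in X}\beta_u$; since the omitted $|X|\ge 2$ terms are non-negative, we may loosen to the full sum and then apply Lemma~\ref{lemma:incl-excl} with weights $w_u=\beta_u$ on $S_1$ and $w_u=0$ on $S_0$, which turns the mixed $E\cap F$ events into pure $E$ events:
\begin{equation*}
\Pr[E^S_{t+1}]\le\sum_{X\subset S_1}\Pr[E^{S\setminus X}_t]\prod_{u\in X}\beta_u\prod_{u\in S_1\setminus X}(1-\beta_u).
\end{equation*}
Applying the inductive hypothesis $\Pr[E^{S\setminus X}_t]\le\prod_{u\in S\setminus X}\alpha_{t,u}$ termwise and factoring the resulting sum using $\sum_{X\subset S_1}\prod_{u\in X}a_u\prod_{u\in S_1\setminus X}b_u=\prod_{u\in S_1}(a_u+b_u)$ yields
\begin{equation*}
\Pr[E^S_{t+1}]\le\prod_{u\in S_0}\alpha_{t,u}\cdot\prod_{u\in S_1}\bigl[\alpha_{t,u}+\beta_u(1-\alpha_{t,u})\bigr]=\prod_{u\in S_0}\alpha_{t,u}\cdot\prod_{u\in S_1}\bigl[\alpha_{t,u}+x_{(v_t,u)}\bigr]=\prod_{u\in S}\alpha_{t+1,u},
\end{equation*}
closing the induction.

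The main obstacle, I expect, is noticing that one should \emph{loosen} the bound by adding the non-negative higher-order ($|X|\ge 2$) terms. The naive alternative --- writing $\Pr[E^{S\setminus\{u\}}_t\cap F^{\{u\}}_t]=\Pr[E^{S\setminus\{u\}}_t]-\Pr[E^S_t]$ and substituting IH upper bounds --- breaks because the resulting coefficient $1-\sum_{u\in S_1}\beta_u$ on $\Pr[E^S_t]$ can be negative (and the IH only gives an upper bound on $\Pr[E^S_t]$). Completing the sum to all $X\subset S_1$ puts it in the exact form to which Lemma~\ref{lemma:incl-excl} applies, converting the mixed $E\cap F$ probabilities into a non-negative combination of pure $E$ probabilities on which the IH bound can be used termwise.
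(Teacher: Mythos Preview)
Your proof is correct and follows essentially the same route as the paper: induct on $t$, bound the one-step increment by $\beta_u$, loosen to the full sum over subsets $X$, apply Lemma~\ref{lemma:incl-excl} to convert to pure $E$-events, then apply the inductive hypothesis termwise and factor. One small point the paper makes explicit that you gloss over: you need $\beta_u\in[0,1]$ (immediate from constraint~\eqref{eq:LP-main}, which gives $\beta_u\le p_{v_t}\le 1$) so that the coefficients $\prod_{u\in X}\beta_u\prod_{u\in S_1\setminus X}(1-\beta_u)$ are nonnegative when you substitute the IH upper bound termwise, and the paper separately disposes of the degenerate case $\alpha_{t+1,u}\ge 1$ (where $\beta_u$ would be ill-defined) by dropping such $u$ from $S$.
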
 

\begin{proof}
  We use proof by induction on $t$.
  Our claim holds for the base case of $t=1$ as for $t=1$, both sides of Equation~\ref{eq:thm-corr-main} equal to zero for nonempty $S$ (and one for $S=\emptyset$).
  Assuming that for some $t\geq 1$, this equation holds, we will prove it for $t+1$.
  In other words, we will prove
  \begin{align}\Pr[E^S_{t+1}]\leq \prod_{u\in S} (\alpha_{t,u}+x_{(t,u)}), \label{eq:thm-corr-main-wts} \end{align}
  for any $S \subset B$.
  For now, we assume that \begin{align}\alpha_{t+1,u}=\alpha_{t,u}+x_{(v_t,u)}<1 \label{eq:ass-unsat}\end{align} holds for all $u\in S$.
  For such $u$, we may define \begin{align}\beta_{t,u}=\frac{x_{(v_t,u)}}{1-\alpha_{t,u}}\in\clop{0,1}.\label{eq:beta-bound}\end{align}
  We aim to show that $S$ satisfies Equation~\ref{eq:thm-corr-main-wts}.
  For $E^S_{t+1}$ to occur, either all of $S$ was matched already before time $t$, or some vertex $u\in S$ was matched exactly at time $t$, with the others matched before.
  This lets us compute
  \begin{align}
    \Pr[E^S_{t+1}]
    &= \Pr[E^S_t]+\sum_{u\in S}\Pr\s*{E^{S\setminus\{u\}}_t\cap F^{\{u\}}_t}\Pr\s*{(v_t,u)\in M\mid E^{S\sm\cu u}_t\cap F^{\cu u}_t} \nonumber \\
    &\le \Pr[E^S_t]+\sum_{u\in S}\Pr\s*{E^{S\setminus\{u\}}_t\cap F^{\{u\}}_t}\frac{x_{(v_t,u)}}{p_v(1-\alpha_{t,u})}p_v \label{eq:corr-line-expand} \\ \tag{to be matched, $(v_t,u)$ must have been added to $P$ and $v_t$ must have arrived, independently} \\
    &= \sum_{\substack{X\subset S\nonumber \\\abs X\le1}}\Pr\s*{E^{S\setminus X}_t\cap F^X_t}\prod_{u\in X}\beta_{t,u} \nonumber \\
    &\le \sum_{X\subset S}\Pr\s*{E^{S\setminus X}_t\cap F^X_t}\prod_{u\in X}\beta_{t,u} \tag{Equation~\ref{eq:beta-bound}} \nonumber \\
    &= \sum_{Y\subset S}\Pr\s*{E^{S\setminus Y}_t}\p*{\prod_{u\in Y}\beta_{t,u}}\prod_{u\in S\setminus Y}\p*{1-\beta_{t,u}} \tag{Lemma~\ref{lemma:incl-excl}} \nonumber \\
    &\le \sum_{Y\subset S}\p*{\prod_{u\in S\setminus Y}\alpha_{t,u}}\p*{\prod_{u\in Y}\beta_{t,u}}\prod_{u\in S\setminus Y}\p*{1-\beta_{t,u}} \tag{inductive hypothesis, and Equation~\ref{eq:beta-bound}} \nonumber \\
    &= \prod_{u\in S}\p*{\beta_{t,u}+\alpha_{t,u}(1-\beta_{t,u})} \nonumber \\
    &= \prod_{u\in S}\frac{x_{(v_t,u)}+\alpha_{t,u}(1-\alpha_{t,u}-x_{(v_t,u)})}{1-\alpha_{t,u}} \nonumber \\
    &= \prod_{u\in S}(x_{(v_t,u)}+\alpha_{t,u}) \nonumber ,
  \end{align}
  which is exactly Equation~\ref{eq:thm-corr-main-wts}.

  Before we complete our proof, we must still consider $S\subset B$ where, for at least some $u\in S$, the inequality from Equation~\ref{eq:ass-unsat} is violated.
  Let $S'\subset S$ denote the vertices satisfying Equation~\ref{eq:ass-unsat}.
  Then
  \begin{align*}
    \Pr[E^S_{t+1}] \le \Pr[E^{S'}_{t+1}] \le \prod_{u\in S'}\alpha_{t+1,u} \le \prod_{u\in S}\alpha_{t+1,u} ,
  \end{align*}
  since for any $u\in S\setminus S'$, we have $\alpha_{t+1,u}>1$ (and all other $\alpha_{t,u}$ are non-negative).
\end{proof}

\begin{lemma}\label{lemma:no-proposals-bound}
  In Algorithm~\ref{alg}, at any time $t$, for any subset of vertices $S\subset B$, the probability that none of the vertices in $S$ joins $P$ is upper-bounded by $$\prod_{u\in S} (1-x_{(t,u)}/p_t).$$
\end{lemma}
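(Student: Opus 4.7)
The plan is to condition on the matching status of the vertices in $S$ immediately before time $t$, then exploit the fact that, given this status, the sending decisions in line~\ref{alg:samp-line} of Algorithm~\ref{alg} are mutually independent. Concretely, each unmatched $u\in S$ independently joins $P$ with probability $\gamma_{t,u}:=x_{(v_t,u)}/(p_t(1-\alpha_{t,u}))$, and each matched $u\in S$ trivially contributes a factor of $1$. Constraint~\eqref{eq:LP-main} of the LP guarantees $\gamma_{t,u}\in\clop{0,1}$ so this is a valid probability (in the degenerate case $\alpha_{t,u}=1$ we must have $x_{(v_t,u)}=0$, so such $u$ contribute $1$ on both sides and may be dropped from $S$). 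Summing over all free/matched partitions $(X,S\sm X)$ of $S$,
\begin{align*}
\Pr\s*{\text{no vertex in }S\text{ joins }P} = \sum_{X\subset S}\Pr\s*{E^{S\sm X}_t\cap F^X_t}\prod_{u\in X}(1-\gamma_{t,u}).
\end{align*}

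Next I would apply Lemma~\ref{lemma:incl-excl} with $w_u=1-\gamma_{t,u}$ to rewrite the right-hand side as
\begin{align*}
\sum_{X\subset S}\Pr\s*{E^{S\sm X}_t}\p*{\prod_{u\in X}(1-\gamma_{t,u})}\prod_{u\in S\sm X}\gamma_{t,u},
\end{align*}
and then invoke Lemma~\ref{lemma:corr} to upper-bound $\Pr[E^{S\sm X}_t]\le\prod_{u\in S\sm X}\alpha_{t,u}$. The resulting expression factors cleanly over $S$ by pairing $\alpha_{t,u}\gamma_{t,u}$ with $1-\gamma_{t,u}$:
\begin{align*}
\prod_{u\in S}\s*{(1-\gamma_{t,u})+\alpha_{t,u}\gamma_{t,u}} = \prod_{u\in S}\s*{1-(1-\alpha_{t,u})\gamma_{t,u}} = \prod_{u\in S}\p*{1-\frac{x_{(v_t,u)}}{p_t}},
\end{align*}
using the definition of $\gamma_{t,u}$ in the last step. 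This is exactly the claimed bound.

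The only genuine subtlety is justifying the first displayed equality: that conditioning on the matching history at time $t$ leaves the Bernoulli coins in line~\ref{alg:samp-line} independent with the stated bias. This is immediate from the algorithm since the coins are tossed fresh at time $t$ with probabilities that depend only on the LP solution and on $\alpha_{t,u}$ (a deterministic function of the LP), not on the realized history. Everything else is a mechanical combination of Lemmas~\ref{lemma:incl-excl} and~\ref{lemma:corr} together with the algebraic identity $(1-\gamma_{t,u})+\alpha_{t,u}\gamma_{t,u}=1-(1-\alpha_{t,u})\gamma_{t,u}$.
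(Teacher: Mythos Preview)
Your proof is correct and follows essentially the same route as the paper: partition on the free/matched status of $S$ at time $t$, apply Lemma~\ref{lemma:incl-excl} with $w_u=1-\gamma_{t,u}$, invoke Lemma~\ref{lemma:corr} (using $\gamma_{t,u}\in\clop{0,1}$ so the coefficients are nonnegative), and factor. Your $\gamma_{t,u}$ is exactly the paper's $\beta_{t,u}/p_t$, and your explicit handling of the degenerate case $\alpha_{t,u}=1$ and the justification that the step-$t$ coins are fresh are welcome clarifications.
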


\begin{proof}
  Pick some $S\subset B$.
  In order for none of the vertices from $S$ to join $P$, then each $u\in S$ that is still unmatched must fail to be sampled with probability $\frac{x_{(v_t,u)}}{p_t(1-α_u)}=β_{t,u}/p_t$, using $β_{t,u}$ as defined in the proof of Lemma~\ref{lemma:corr}.
  We can thus bound this probability to be at most
  \begin{align*}
    &\phantom{{}={}} \sum_{X\subset S}\Pr\s*{E^{S\sm X}_t\cap F^X_t}\prod_{u\in X}(1-β_{t,u}/p_t) \\
    &= \sum_{Y\subset S}\Pr\s*{E^{S\sm Y}_t}\p*{\prod_{u\in Y}(1-β_{t,u}/p_t)}\prod_{u\in S\sm Y}β_{t,u}/p_t \tag{Lemma~\ref{lemma:incl-excl}} \\
    &\le \sum_{Y\subset S}\p*{\prod_{u\in S\sm Y}α_{t,u}}\p*{\prod_{u\in Y}(1-β_{t,u}/p_t)}\prod_{u\in S\sm Y}β_{t,u}/p_t \tag{Lemma~\ref{lemma:corr}} \\
    &= \prod_{u\in S}(α_{t,u}β_{t,u}/p_t+1-β_{t,u}/p_t) \\
    &= \prod_{u\in S}\frac{α_{t,u}x_{(t,u)}/p_t+(1-α_{t,u})-x_{(t,u)}/p_t}{1-α_{t,u}} \\
    &= \prod_{u\in S}(1-x_{(t,u)}/p_t)
      . \qedhere \end{align*}
  \end{proof}

  \begin{lemma}\label{lemma:at-least-w-bound}
    For any  vertex $v_t\in A$,  and any non-negative number $w$ we have:
    \begin{align}
      \Pr[\alg(v_t) \geq w]  \geq  (1- 1/e)\, \sum_{\mathclap{e\ni v_t, w_e\geq w}} x_e.
    \end{align}
  \end{lemma}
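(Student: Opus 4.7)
The plan is to reduce the event $\{\alg(v_t)\ge w\}$ to a no-proposals event on a specific subset of neighbors of $v_t$, then apply Lemma~\ref{lemma:no-proposals-bound} together with standard concavity-based estimates of $1 - e^{-x}$.

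First, I would define $S_w = \{u \in N_{v_t} : w_{(v_t,u)} \ge w\}$, the set of neighbors of $v_t$ whose incident edge has weight at least $w$. Since $v_t$ always accepts the heaviest proposal it receives, the event $\alg(v_t)\ge w$ is exactly the conjunction of (i) $v_t$ is realized, and (ii) the set $P$ constructed at time $t$ intersects $S_w$. Crucially, the construction of $P$ depends only on the history before time $t$ and the fresh coin flips at line~\ref{alg:samp-line}, so it is independent of $v_t$'s realization. Hence
\begin{align*}
\Pr[\alg(v_t)\ge w] = p_{v_t}\cdot\Pr[P\cap S_w\ne\emptyset] = p_{v_t}\left(1 - \Pr[\text{no vertex in }S_w\text{ joins }P]\right).
\end{align*}

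Next, I would apply Lemma~\ref{lemma:no-proposals-bound} to the set $S_w$ to upper-bound the latter probability by $\prod_{u\in S_w}(1 - x_{(v_t,u)}/p_{v_t})$. Writing $q = \sum_{u\in S_w} x_{(v_t,u)}/p_{v_t} = \frac{1}{p_{v_t}}\sum_{e\ni v_t,\, w_e\ge w} x_e$, LP constraint~\eqref{eq:LP-topbound} gives $q\le 1$. The standard inequality $\prod_i(1-y_i)\le e^{-\sum_i y_i}$ (valid for $y_i\in[0,1]$) then yields
\begin{align*}
1 - \prod_{u\in S_w}\left(1 - \frac{x_{(v_t,u)}}{p_{v_t}}\right) \ge 1 - e^{-q}.
\end{align*}

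Finally, since $1 - e^{-x}$ is concave with $1-e^{0}=0$ and $1-e^{-1} = 1-1/e$, for all $x\in[0,1]$ we have $1 - e^{-x}\ge (1 - 1/e)\,x$. Applying this to $x = q$ (using $q\le 1$) and multiplying by $p_{v_t}$ gives
\begin{align*}
\Pr[\alg(v_t)\ge w] \ge p_{v_t}\cdot(1-1/e)\cdot q = (1 - 1/e) \sum_{e\ni v_t,\, w_e\ge w} x_e,
\end{align*}
as desired. There is no real obstacle here; the content is entirely packaged into Lemma~\ref{lemma:no-proposals-bound}. The only point requiring care is the independence argument in the first step and the verification that $q\le 1$ (so that the linear bound on $1-e^{-x}$ is applicable), both of which follow immediately from the algorithm's structure and the LP constraints.
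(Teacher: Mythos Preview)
Your proof is correct and follows essentially the same route as the paper's: define the set of high-weight neighbors, factor out the independent realization of $v_t$, invoke Lemma~\ref{lemma:no-proposals-bound}, and then pass from the product bound to the linear $(1-1/e)q$ bound via $1-e^{-q}$. The only cosmetic difference is that you apply $1-y\le e^{-y}$ termwise rather than inserting an AM--GM step first, and you make the use of constraint~\eqref{eq:LP-topbound} (to ensure $q\le 1$) explicit where the paper leaves it implicit.
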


  \begin{proof}
    Fix some $v_t\in A$ and $w\ge0$.
    Let $S\subset B$ denote the set of all $u\in B$ such that $w_{(v_t,u)}\ge w$.
    Then as long as some vertex from $S$ is added to $P$ at time $t$, and vertex $v_t$ arrives, then $\alg(v_t)\ge w$ will hold.
    Note that the arrival of $v_t$ is independent of $P$, so we can compute
    \begin{align*}
      \Pr\s*{\alg(v_t)\ge w}
      &= p_t\Pr\s*{S\cap P\ne\emptyset} \\
      &= p_t-p_t\prod_{u\in S}(1-x_{(t,u)}/p_t) \tag{Lemma~\ref{lemma:no-proposals-bound}} \\
      &\ge p_t-p_t\p*{1-\sum_{u\in S}\frac{x_{(t,u)}}{p_t\abs S}}^{\abs S} \tag{AM-GM} \\
      &\ge p_t-p_t\exp\s*{-\sum_{u\in S}x_{(t,u)}/p_t} \\
      &\ge p_t-p_t\p*{1-(1-1/e)\sum_{u\in S}x_{(t,u)}/p_t} \tag{convexity} \\
      &= (1-1/e)\sum_{u\in S}x_{(t,u)}
        . \qedhere
    \end{align*}
  \end{proof}

  \begin{theorem}\label{thm:main1}
    Algorithm~\ref{alg} outputs a $(1-1/e)$-approximate matching, that is
    $$\sum_{v_t\in A} \E[\alg(v_t)] \geq (1-1/e)\cdot \opt.$$
  \end{theorem}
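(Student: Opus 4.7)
The plan is to combine Lemma~\ref{lemma:at-least-w-bound} with the standard layer-cake (tail-sum) representation of expectation, and then invoke Observation~\ref{obs:optlp} to pass from the LP value to $\opt$.

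First, for any fixed $v_t\in A$, since $\alg(v_t)\ge 0$ I would write
\[
  \E[\alg(v_t)] \;=\; \int_0^{\infty} \Pr[\alg(v_t)\ge w]\,dw.
\]
Applying Lemma~\ref{lemma:at-least-w-bound} pointwise gives
\[
  \E[\alg(v_t)] \;\ge\; (1-1/e)\int_0^{\infty}\sum_{e\ni v_t,\; w_e\ge w} x_e\,dw
  \;=\; (1-1/e)\sum_{e\ni v_t} x_e\int_0^{w_e} dw
  \;=\; (1-1/e)\sum_{e\ni v_t} w_e x_e,
\]
where the second equality just swaps the sum and the integral (using Tonelli / the fact that the integrand is a non-negative step function in $w$).

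Next I would sum over all $v_t\in A$. Every edge $e$ has a unique endpoint in $A$, so
\[
  \sum_{v_t\in A}\E[\alg(v_t)] \;\ge\; (1-1/e)\sum_{v_t\in A}\sum_{e\ni v_t} w_e x_e \;=\; (1-1/e)\sum_{e\in E} w_e x_e \;=\; (1-1/e)\,\lpopt.
\]
Finally, Observation~\ref{obs:optlp} gives $\lpopt \ge \opt$, so the left-hand side is at least $(1-1/e)\opt$, which is the desired bound.

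There is no real obstacle here: all of the probabilistic work has been packaged into Lemma~\ref{lemma:at-least-w-bound}, and the remaining steps are purely a tail-integral computation plus LP duality/feasibility. The only thing to be careful about is the passage from the pointwise tail bound to the integral — it requires $\Pr[\alg(v_t)\ge w]=0$ for $w$ larger than $\max_{e\ni v_t} w_e$, which is immediate from the definition of $\alg(v_t)$, and it requires justifying the swap of sum and integral, which is fine because both are non-negative.
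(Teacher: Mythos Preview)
Your proof is correct and follows essentially the same approach as the paper: both apply the tail-integral (layer-cake) representation to $\E[\alg(v_t)]$, invoke Lemma~\ref{lemma:at-least-w-bound} pointwise under the integral, and finish with Observation~\ref{obs:optlp}. The paper phrases the per-vertex step as bounding the difference $\sum_{e\ni v_t} w_e x_e - \E[\alg(v_t)]$ rather than directly lower-bounding $\E[\alg(v_t)]$, but this is only a cosmetic difference.
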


  \begin{proof}
    By Observation~\ref{obs:optlp} we know that $\lpopt$ gives us an upper-bound for $\opt$, that is:
    $$\sum_{e\in E} w_ex_e = \sum_{v_t\in A} \sum_{e\ni v_t} w_ex_e \geq \opt.$$
    As a result, to prove this theorem, it suffices to show  that for any vertex $v_t\in A$ we have
    $$\E[\alg(v_t)]\geq (1-1/e)\sum_{e\ni v_t} w_ex_e,$$ which is the same as proving
    \begin{align}\label{eq:joifwhuwef}
      \sum_{e\ni v_t} w_ex_e-\E[\alg(v_t)] \leq \left(\sum_{e\ni v_t} w_ex_e\right)/e.\end{align}
    By definition, for any vertex $v_t\in A$ we can write the left-hand side of this inequality as

    \begin{align*}
      \sum_{e\ni v_t} w_ex_e-\E[\alg(v_t)] &=  \int_{w=0}^{\infty} \left ( \sum_{e\ni v_t, w_e>w} x_e\right )-  \int_{w=0}^{\infty} \Pr[\alg(v_t) > w] \\
                                        &= \int_{w=0}^{\infty} \left ( \sum_{e\ni v_t, w_e>w} x_e - \Pr[\alg(v_t) > w]\right)\\
                                        & \leq  \int_{w=0}^{\infty} \left ( \sum_{e\ni v_t, w_e>w} x_e\right)/e \tag{Lemma~\ref{lemma:at-least-w-bound} }\\
                                        & = \left(\sum_{e\ni v_t} w_ex_e\right)/e.
    \end{align*}
    This proves Equation~\ref{eq:joifwhuwef} and concludes the proof of the theorem.
  \end{proof}

\subsection{Positive Correlation}\label{section:correlation}

Much of the detail needed in our proof of Theorem~\ref{thm:main1} and related lemmas is due to handling potential correlation between the matched/unmatched status of the vertices in $B$.
In particular, the proof of our main lemma (Lemma~\ref{lemma:no-proposals-bound}) could proceed fairly directly if we were allowed to assume that events $E^{\cu {u_1}}_t$, \dots, $E^{\cu {u_{|B|}}}_{T}$ (the events that vertices in $u\in B$ are matched before any time $t$), are independent from each other. Similarly, if we had the notion of negative dependence used in~\cite{DBLP:conf/sigecom/PapadimitriouPS21}, namely \emph{negative association} of the indicator variables for the events $E^{\cu u}_t$, this would also suffice to arrive at Lemma~\ref{lemma:no-proposals-bound}.
In this section, we will show that a more involved analysis such as ours is in fact necessary since our algorithm sometimes causes positive correlation between these events.
We construct a bipartite graph $G=(A,B,E)$ with $A=\cu{v_1,v_2,v_3}$ and $B=\cu{u_1,u_2}$, such that before time $t=3$  the events $E^{\cu{u_1}}_3$ and $E^{\cu{u_2}}_3$ are in fact positively correlated.
The edge set, along with the values of $p_v$ and $x_e$ for $v\in A$ and $e\in E$ are given in the following diagram:
\[\begin{tikzpicture}
    [scale=1,auto=left,every node/.style={circle,fill=black!20}]
    \node [fill opacity=0,text opacity=1] (v1p) at (-5,1.6) {$1/2$};
    \node (v1) at (-5,1) {$v_1$};
    \node [fill opacity=0,text opacity=1] (v2p) at (-1.5,1.6) {$1/4$};
    \node (v2) at (-1.5,1) {$v_2$};
    \node [fill opacity=0,text opacity=1] (v3p) at (2,1.6) {$ε$};
    \node (v3) at (2,1) {$v_3$};

    \node (u1) at (-3,-2) {$u_1$};
    \node (u2) at (0,-2) {$u_2$};

    \draw (v1) -- (u1) node [midway, fill opacity=0,text opacity=1] {$1/2$};
    \draw (v2) -- (u1) node [near start, fill opacity=0,text opacity=1] {$\!1/8$};
    \draw (v3) -- (u1) node [near start, fill opacity=0,text opacity=1] {$\!\!\!\!>0$};
    \draw (v2) -- (u2) node [near start, fill opacity=0,text opacity=1] {$\!1/8$};
    \draw (v3) -- (u2) node [midway, fill opacity=0,text opacity=1] {$\!\!>0$};
  \end{tikzpicture}\]
One can easily verify that our solution $x$ satisfies the LP constraints, and is optimal for certain values of $w$ (in particular, when $w_{(v_1,u_1)}=100$, and when $w_{(v_2,u_1)}=2$ and $w_{(v_2,u_2)}=1$, and small weights incident to $v_3$).

With probability $1/2$, the first vertex $v_1$ arrives, and is matched with $u_1$ with probability $\frac{1/2}{1/2}=1$.
Assuming this occurs, $v_2$ matches $u_2$ if it arrives (with probability $1/4$) and $u_2$ is added to $P$ in the second step (with probability $\frac{1/8}{1/4}=1/2$).

The other $1/2$ of the time, the first vertex $v_1$ does not arrive, so $u_1$ is added to $P$ in the second step with probability $\frac{1/8}{(1/4)(1-1/2)}=1$ when $v_2$ arrives with probability $1/4$, and no edges are matched otherwise.
Overall, after time $t=2$, both $u_1$ and $u_2$ are matched with probability $1/8$, neither is matched with probability $3/8$, and just $u_1$ is matched with probability $3/8+1/8=1/2$.
The indicator variables for the events $E^{\cu{u_1}}_3$ and $E^{\cu{u_2}}_3$ will thus have positive covariance $$\Pr\s*{E^{\cu{u_1,u_2}}_3}-\Pr\s*{E^{\cu{u_1}}_3}\Pr\s*{E^{\cu{u_2}}_3}=1/8-(5/8)(1/8)=3/64>0.$$

\subsection{Tightness of the Analysis}\label{section:tight}
First, we show that our algorithm indeed loses the factor of $(1-1/e)$ compared to $\opt$.
We construct the graph $G=(A,B,E)$, where $A=\cu{v_1,\dc,v_n,v_*}$ and $B=\cu{u_1,\dc,u_n}$.
For each $i$, there are edges $(v_i,u_i)$ and $(v_*,u_i)$ with weights $1/n^2$ and $1$ respectively.
The vertices from $A$ arrive in order $v_1,\dc,v_n,v_*$, and we have $p_{v_i}=1-1/n$ for all $i$ and $p_{v_*}=1$.
Then, the unique optimal solution $x$ to our LP is given in the following diagram (namely, $x_{(v_i,u_i)}=1-1/n$ and $x_{(v_*,u_i)}=1/n$).
\[\begin{tikzpicture}
    [scale=1,auto=left,every node/.style={circle,fill=black!20}]
    \node (v1) at (-5,1) {$v_1$};
    \node (v2) at (-3,1) {$v_2$};
    \node (vn) at (0,1) {$v_n$};

    \node (vv) at (-2,-5) {$v_*$};

    \node (u1) at (-5,-2) {$u_1$};
    \node (u2) at (-3,-2) {$u_2$};
    \node (un) at (0,-2) {$u_n$};

    \node [fill opacity=0,text opacity=1] at ($(v2)!.5!(vn)$) {\ldots};
    \node [fill opacity=0,text opacity=1] at ($(u2)!.5!(un)$) {\ldots};

    \foreach \from/\to/\wt in {v1/u1,v2/u2,vn/un}
    \draw (\from) -- (\to) node [midway, fill opacity=0,text opacity=1] {$1-1/n$};

    \foreach \from/\to/\wt in {vv/u1,vv/u2}
    \draw (\from) -- (\to) node [near end, fill opacity=0,text opacity=1] {$1/n$};
    \foreach \from/\to/\wt in {vv/un}
    \draw (\from) -- (\to) node [midway, fill opacity=0,text opacity=1] {$1/n$};
  \end{tikzpicture}\]
Consider what our algorithm would do faced with this graph. For each $i$, it would add $u_i$ to $P$ with probability $1$, then add $(v_i,u_i)$ to our matching if $v_i$ is realized.
Hence, the probability that $u_i$ is unmatched by the time we get to vertex $v_*$ is exactly $1/n$, and is independent of all other vertices from $B$.
The probability that $v_*$ will have no neighbors unmatched is thus $(1-1/n)^n$.

We can now bound our algorithms expected matching weight to be at most $$\p*{1-(1-1/n)^n}+n(1-1/n)(1/n^2),$$ which for large $n$ gets arbitrarily close to $1-1/e$.
On the other hand, a trivial online algorithm could instead never match any of the edges $(v_i,u_i)$, and always take one of the edges $(v_*,u_i)$, obtaining a matching with weight $1$ always.
The optimal online algorithm, $\opt$, would thus need to attain at least $1$ in expectation, proving that our algorithm can never be $(1-1/e+ε)$-competitive for any $ε>0$. 

Note that in this example, for any vertex $A$ at the time of its arrival, the matching status of its neighbors are independent. This intuitively means that the loss our algorithm incurs is not due to the correlation it causes between the matching status of the vertices.

\subsection{Generalization of the Algorithm and Analysis}\label{section:general}
With our analysis complete, we can now extend our algorithm to a more general version of the vertex arrival model, allowing for distributions over edge weights.
We will first describe the new model, which Papadimitriou, Pollner, Saberi, and Wajc also considered for their algorithm~\cite{DBLP:conf/sigecom/PapadimitriouPS21}.
We will then explain the main considerations for adapting our algorithm and analysis from Sections~\ref{section:alg}--\ref{sec:analysis} for this harder case.

\paragraph{General Vertex Arrival Model}
Just as in the original vertex arrival model (described in Section~\ref{section:pre}), we have a known bipartite graph $G=(A,B,E)$ and fixed order $(v_1,\dotsc,v_{\abs A})$ over the vertices in $A$.
Vertices in $B$ are initially present, but vertices in $A$ arrive online in this order.
However, rather than having fixed weights for all edges, with each vertex $v_t\in A$ arriving with a probability $p_{v_t}$, we instead realize a sample $w^t$ from a distribution over possible weights for all edges incident to $v_t$.

Formally, for each time $1\le t\le\abs A$, there is a joint distribution $\mc D_t$ with finite support over non-negative assignments of weights for all edges incident to $v_t$.
At time $t$, we draw a sample $w^t\sim\mc D_t$.
This tells us the realized weight $w^t_e$ for each edge $e=(v_t,u)$ incident to $v_t$.
As before, we may now choose to match $v_t$ irrevocably to one of its unmatched neighbours.
The goal is to maximize the total realized weight of all edges in our matching, given by
\[\sum_{(v_t,u)\in M}w^t_{(v_t,u)} ,\]
where $M$ denotes the set of edges in our final matching.

We note that this is indeed a generalization of our original vertex arrival model, which can be represented here by $\mc D_t$ yielding the vector of values $(w_{(v_t,u)})_u$ with probability $p_t$, and the zero vector with probability $1-p_t$.

\paragraph{Modified Algorithm}
To begin, we modify our LP from Section~\ref{section:alg} to yield another LP relaxation under this more general model, using the same natural extension as given in~\cite{DBLP:conf/sigecom/PapadimitriouPS21}.

Since our distributions $\mc D_t$ are assumed to be finite, for each $t$, we let $p_{t,i}$ denote the probability mass for the $i$-th possibility (and $i$ varies from $1$ to the size of the support of $\mc D_t$), and define $w_{t,i,u}$ to be the weight assigned to edge $(v_t,u)$ in this case.
We will have variables $y_{t,i,u}$ representing the probability of $v_t$ being matched to $u$ with value $w_{t,i,u}$ in $\opt$.
These take the place of $x_e$ from before (representing the probability of $e$ being in our matching), so we can now give our modified LP:
\begin{align}
\max_{\mathbf{y}} \qquad &\sum_{(v_t,u)\in E, i} w_{t,i,u}\cdot y_{t,i,u}	\,,& \label{eq:LPg-objfun} \\
\textrm{s.t.} \qquad & \sum_{(v_t,u)\in E} y_{t,i,u} \le p_{t,i} &  \forall\, v_t\in A, i  \,, \label{eq:LPg-topbound} \\
& \sum_{t,i} y_{t,i,u} \le 1 &  \forall\, u\in B  \,, \label{eq:LPg-botbound} \\
& p_{t,i}\cdot (1-\sum_{t'<t,i'} y_{t',i',u}) \ge y_{t,i,u} &  \forall\, v_t\in A,u\in B,i  \,, \label{eq:LPg-main} \\
&y_{t,i,u} \ge 0& \forall v_t\in A,u\in B,i \,. \label{eq:LPg-nonneg}
\end{align}

The actual rounding procedure of Algorithm~\ref{alg} also needs modification.
At each iteration of the loop, we will first sample $w^t\sim\mc D_t$, obtaining some possibility $\hat i$.
We can define
\begin{align}
  \alpha_u=\sum_{t'<t,i}y_{t',i,u} \label{eq:genalpha}
\end{align}
instead at Line~\ref{alg:defalpha}.
We will now add each vertex $u\in N_v$ to $P$ independently with probability $\frac{y_{t,\hat i,u}}{p_{t,\hat i}(1-\alpha_u)}$ instead of $\frac{x_{(v,u)}}{p_v(1-\alpha_u)}$ at Line~\ref{alg:samp-line}.
This is again easily seen to be well-defined, by Equation~\ref{eq:LPg-main} from our modified LP.
The maximum weight sampled neighbour will then be chosen based on the sampled weights $w_{t,\hat i,u}$.

\paragraph{Modified Analysis}
Our analysis remains largely valid, and applies to this more general model with minor modifications.

To start, $\alpha_{t,u}$ must be defined as in Equation~\ref{eq:genalpha}.
This allows the proof of Lemma~\ref{lemma:corr} to go through as written, replacing occurrences of $x_{(v_t,u)}$ with $\sum_iy_{t,i,u}$.
We must also more carefully expand the probability at Equation~\ref{eq:corr-line-expand}, observing that
\begin{align*}
  \Pr\s*{(v_t,u)\in M\mid E^{S\sm\cu u}_t\cap F^{\cu u}_t}
  \le \sum_ip_{t,i}\frac{y_{t,i,u}}{p_{t,i}(1-\alpha_{t,u})}
  = \beta_{t,u}.
\end{align*}

Lemma~\ref{lemma:no-proposals-bound} needs slight adjustment to its statement.
We can instead show that, for any $i$, \emph{if we assume} that the $i$-th possible weight vector is drawn from $\mc D^t$, so the realized weight of $(v_t,u)$ is $w_{t,i,u}$ for all $u$, \emph{then} the probability that no vertex from $S$ joins $P$ is upper-bounded by
\[\prod_{u\in S}(1-y_{t,i,u}/p_{t,i}).\]
The proof now still holds, replacing all occurrences of $x_{(t,u)}$ with $y_{t,i,u}$, and replacing occurrences of $p_t$ with $p_{t,i}$.

Lemma~\ref{lemma:at-least-w-bound} can be modified similarly, again conditioning on the realization of $\mc D^t$, and making the same substitutions, defining $S\subset B$ as all $u\in B$ where $w_{t,i,u}\ge w$ for the assumed realization $i$.
Finally, to extend Theorem~\ref{thm:main1} to this more general model, by taking expectations over the drawing of $w^t\sim\mc D^t$, it suffices to show for any time $t$ and realization $i$ that
\[\E[\alg(v_t)\mid w^t_u=w_{t,i,u}]\geq (1-1/e)\sum_{(v_t,u)\in E} w_{t,i,u}y_{t,i,u}.\]
Again, the proof carries out similarly to before, using our conditional version of Lemma~\ref{lemma:at-least-w-bound}, and replacing occurrences of $w_{(v_t,u)}$ and $x_{(v_t,u)}$ with $w_{t,i,u}$ and $y_{t,i,u}$, respectively.

\section{Edge Arrivals}\label{sec:edge}

\subsection{The Algorithm}\label{sec:edge-algorithm} 

Similar to the vertex arrival version, we start with  an LP for the online problem, and use its solution to build our matching. Again, for each edge $e\in E$, we have the variable $x_e$ represent the probability of $e$ joining the matching in $\opt$.
\begin{align} 
\max_{\mathbf{x}} \qquad &\sum_{e\in E} w_ex_e	\,,& \label{eq:LP-edge-objfun} \\
\textrm{s.t.} \qquad & \sum_{e\ni u} x_e \le 1 &  \forall\, u\in A\cup B  \,, \label{eq:LP-edge-sumone} \\
& p_e\cdot (1-\sum_{\mathclap{e'\ni v, e'<e}} x_{e'}) \ge x_{e} &  \forall\, e=(v,u)\in E  \,, \label{eq:LP-edge-main-top} \\
& p_e\cdot (1-\sum_{\mathclap{e'\ni u, e'<e}} x_{e'}) \ge x_{e} &  \forall\, e=(v,u)\in E  \,, \label{eq:LP-edge-main-bot} \\
&x_e \ge 0& \forall e \in E \,. \label{eq:LP-edge-nonneg}
\end{align}
We would again like to assert that any $x_e$ corresponding to the execution of $\opt$ yields a valid solution to this LP.
Constraint~\ref{eq:LP-edge-sumone} is as before.

We now consider Constraint~\ref{eq:LP-edge-main-top}.
In order for $\opt$ to add $e=(v,u)$ to the matching, it cannot have matched any edge to $v$ already.
This occurs with probability exactly $\sum_{e'\ni v,e'<v}x_{e'}$, by definition of $e'$, and since the corresponding events are disjoint.
Finally, since $p_e$ being realized is independent from all previous realizations (and any randomness used by the algorithm), the probability that $v$ has not been matched \emph{and} $e$ is realized is given by the left-hand side of Constraint~\ref{eq:LP-edge-main-top}, and so the bound must follow.
Constraint~\ref{eq:LP-edge-main-bot} is similar, and we obtain an observation analogous to Observation~\ref{obs:optlp}.
\begin{observation}\label{obs:optlp-edge}
  Let \b{x} be an optimal solution of the LP. We have $\lpopt \geq \opt$ where $\lpopt = \sum_{e\in E} w_ex_e$.
\end{observation}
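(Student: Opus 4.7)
The plan is to mirror the proof of Observation~\ref{obs:optlp}: exhibit a specific LP-feasible vector whose objective value equals $\E[\opt]$. Concretely, I would define $q_e$ to be the probability, taken over both the independent edge realizations and any internal randomness used by $\opt$, that edge $e$ is included in $\opt$'s matching. Then $\sum_{e \in E} w_e q_e = \E[\opt]$, so verifying that $\mathbf{q}$ is feasible for the LP immediately yields $\lpopt \ge \sum_e w_e q_e = \opt$ by optimality of $\mathbf{x}$.

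Nonnegativity of $\mathbf{q}$ is immediate. For Constraint~\ref{eq:LP-edge-sumone}, I would argue that for any vertex $u \in A \cup B$, the events $\{e \in \opt\text{'s matching}\}$ as $e$ ranges over edges incident to $u$ are pairwise disjoint (a matching contains at most one edge at $u$), so $\sum_{e \ni u} q_e = \Pr[u \text{ is matched by } \opt] \le 1$.

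For the two online constraints, I would fix $e = (v,u)$ and observe that, for $e$ to enter $\opt$'s matching, two things must both happen: $e$ must be realized (an event of probability $p_e$), and $v$ must still be unmatched at the moment $e$ arrives. Applying the same disjointness to edges $e' \ni v$ with $e' < e$ gives $\Pr[v \text{ is matched before time } e] = \sum_{e' \ni v, e' < e} q_{e'}$. The crux is the independence claim: the realization of $e$ is independent of the entire history prior to time $e$, i.e., of all past arrivals, realizations, and decisions of $\opt$. This is precisely where the online nature of $\opt$ matters --- its past decisions cannot depend on whether $e$ will be realized. Combining this independence with the disjointness computation yields
\[
q_e \le \Pr\bigl[e \text{ is realized and } v \text{ is free just before time } e\bigr] = p_e\Bigl(1 - \sum_{\substack{e' \ni v \\ e' < e}} q_{e'}\Bigr),
\]
which is exactly Constraint~\ref{eq:LP-edge-main-top}. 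Constraint~\ref{eq:LP-edge-main-bot} follows by the symmetric argument with $u$ in place of $v$.

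The main --- indeed the only --- subtle step is articulating the independence of $e$'s realization from everything preceding time $e$; once this is stated cleanly, the remainder of the proof is straightforward probability accounting, and I do not anticipate any real obstacle.
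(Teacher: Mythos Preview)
Your proposal is correct and follows essentially the same approach as the paper: the paper's justification (given in the paragraph immediately preceding the observation) likewise takes the vector of probabilities that $\opt$ matches each edge, verifies the matching constraint via disjointness, and verifies Constraints~\ref{eq:LP-edge-main-top} and~\ref{eq:LP-edge-main-bot} using exactly the independence-of-realization-from-history argument you describe.
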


\subsection{The Rounding Procedure}
\label{sec:edge-rounding}

We give our online rounding procedure in Algorithm~\ref{edge-alg}.
Here, we think of the vertices $u\in B$ as again making proposals to their neighbours $v\in A$ with some probability (based on $b$), as long as the corresponding edge $e_i$ is realized.
Then, $v$ must decide if it accepts a proposal online. This is as opposed to the vertex arrival model, where $v$ knew all its proposals upon arrival. Since the graph is weighted, simply accepting the first proposal may result in a significant loss. To resolve this issue, our algorithm is designed in a way that each edge $e$ joins the final matching with probability exactly $x_e/2$. In this sense, our algorithm  resembles the one designed by Ezra et al.~\cite{DBLP:conf/sigecom/EzraFGT20} for the vertex arrival version of the problem. 

Before stating the algorithm formally, we give a brief overview. The algorithm starts with all the vertices marked as alive, but as the algorithm proceeds it marks some of them as dead. Vertices in $B$ only die when they are matched. However, we sometimes mark a vertex in $A$ as dead without it being matched. At any time $t$, when edge $e=(v, u)$ arrives, the algorithm needs to decide whether to add this edge to the matching. If $v$ is alive at this point, independent of the status of $u$, it randomly (with a probability set in  Line~\ref{line:ihurfihuerf} of the algorithm) decides whether to send a proposal to $u$. The probability of this event is set in a way that the probability of $u$ ever receiving a proposal from $v$ is equal to $x_e$. If $u$ is alive, it decides randomly (with a probability set in  Line~\ref{line:jhrhgjre} of the algorithm) whether to accept the proposal. If a match happens, we mark $u$ as dead to ensure that we do not match it again in the future. However, vertex $v$ dies iff it send a proposal regardless of the proposal being accepted. This serves two purposes. First, to ensure that its future edges are not matched with a probability higher than $1/2$. Second, to ensure that alive/dead status of the vertices in $A$ are independent of each other throughout the algorithm. 
\vspace{2mm}

\begin{tboxalg}{Rounding Procedure}\label{edge-alg}
\begin{algorithmic}[1]
\STATE Let $\b x$ be an optimal solution of the LP.
\STATE Let $M\gets \emptyset$ be a matching of $E$.
\STATE Mark all the vertices in $V$ as alive.
\FOR{$t\in |E|$}
\STATE Let $e_t = (v,u)$ where $v\in A$ and $u\in B$.
\STATE Define $\alpha_{u}=\sum_{e\ni u, e<e_t} x_{e}.$ \label{line:edgealg-alphau}
\STATE Define $\alpha_{v}=\sum_{e\ni v, e<e_t} x_{e}.$
\STATE Let $b$ be a Bernoulli random variable which is equal to one with probability  $\frac{x_{e_t}}{p_{e_t}(1-\alpha_u)}$.\label{line:ihurfihuerf}
\IF{$u$ is alive, $e_t$ is realized, and $b=1$} \label{line:edgealg-cond}
\STATE If $v$ is also alive, then with probability $\frac{1}{2-\alpha_v}$ add $e_t$ to $M$ and mark $v$ as dead.\label{line:jhrhgjre}
\STATE Mark $u$ as dead.
\ENDIF
\ENDFOR
\STATE Return matching $M$.
\end{algorithmic}
\end{tboxalg}
\vspace{2mm}
We note that this algorithm necessarily returns a valid matching since whenever we add an edge $e_i=(v,u)$ to $M$, we also mark both $v$ and $u$ as dead (and will never again add any of their incident edges to $M$).
Otherwise, everything is well-defined (notably, $\alpha_v,\alpha_u\in[0,1]$) by the LP constraints.

\subsection{The Analysis} \label{section:edge-analysis}

The first half of our analysis will focus on showing that the proposals arriving at a given vertex $u$ are well-behaved.
To begin, we show that a vertex $u\in B$ proposes to $v$ with probability exactly $x_{(v,u)}$.
\begin{lemma}\label{lem:edge-proposal-prob}
  On any iteration $t$ of Algorithm~\ref{edge-alg}, the probability that the condition at Line~\ref{line:edgealg-cond} holds is $x_{e_t}$.
\end{lemma}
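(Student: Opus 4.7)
The plan is to prove this by strong induction on the iteration index $t$. The key observation is that the three sub-events making up the condition at Line~\ref{line:edgealg-cond}, namely that $u$ is alive, that $e_t$ is realized, and that $b=1$, are mutually independent. The realization of $e_t$ is a fresh independent draw from the instance (independent of all prior randomness), and $b$ is a fresh Bernoulli, also independent of everything that preceded it. Consequently, the probability the condition holds factors as
\[
\Pr[u \text{ alive at start of iteration } t]\cdot p_{e_t}\cdot\frac{x_{e_t}}{p_{e_t}(1-\alpha_u)},
\]
and so the whole lemma reduces to showing that $\Pr[u \text{ alive at start of iteration } t]=1-\alpha_u$.

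To establish this, I would induct on $t$. The base case $t=1$ is trivial: no prior iteration has touched any vertex, so every $u\in B$ is alive with probability one, and $\alpha_u=0$. For the inductive step, inspect the algorithm and note that a vertex $u\in B$ is marked dead only inside the if-block at Line~\ref{line:edgealg-cond}, and only when that $u$ is the $B$-endpoint of the currently arriving edge. Hence the event ``$u$ dies at iteration $s<t$'' can occur only for those $s$ with $e_s\ni u$, and exactly when the condition at Line~\ref{line:edgealg-cond} holds at iteration $s$. By the inductive hypothesis applied to $s<t$, that event has probability $x_{e_s}$. Since at most one such dying-event can ever occur (a vertex dies only once), these events are disjoint, giving
\[
\Pr[u \text{ alive at start of iteration } t]=1-\sum_{s<t,\,e_s\ni u}x_{e_s}=1-\alpha_u,
\]
by definition of $\alpha_u$ at Line~\ref{line:edgealg-alphau}.

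Combining the two ingredients yields
\[
(1-\alpha_u)\cdot p_{e_t}\cdot\frac{x_{e_t}}{p_{e_t}(1-\alpha_u)}=x_{e_t},
\]
as desired. Two minor well-definedness points I would address along the way: first, LP constraint~\eqref{eq:LP-edge-main-bot} guarantees $1-\alpha_u>0$ whenever $x_{e_t}>0$, so the Bernoulli parameter at Line~\ref{line:ihurfihuerf} is a legitimate probability in $[0,1]$ and there is no division by zero (the case $x_{e_t}=0$ is trivial, giving $b=0$ deterministically and probability $0$); second, the independence argument requires that the sub-event ``$u$ alive at start of $t$'' is measurable with respect to the randomness of iterations $1,\dots,t-1$, which is evident from the algorithm's structure since ``alive'' status only changes inside the prior iterations. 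The main (and in fact only) obstacle is keeping the inductive bookkeeping clean — in particular, being explicit that the only way $u$ can die is through its own incident edges, so that the disjoint-events sum matches $\alpha_u$ exactly rather than bounding it.
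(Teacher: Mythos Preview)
Your proof is correct and follows essentially the same strong-induction approach as the paper: compute $\Pr[u\text{ alive at start of }t]=1-\alpha_u$ via the inductive hypothesis and disjointness of the dying events, then multiply by the independent fresh randomness $p_{e_t}\cdot\tfrac{x_{e_t}}{p_{e_t}(1-\alpha_u)}$. If anything, your write-up is more careful than the paper's about the well-definedness of the Bernoulli parameter via constraint~\eqref{eq:LP-edge-main-bot}.
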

\begin{proof}
  We prove this by strong induction for a given vertex $u\in A$.
  Fix $t\ge1$, and suppose this holds for all $t'<t$.
  That is, for every $(v,u_{t'})<t$, the probability that the condition at Line~\ref{line:edgealg-cond} holds (that is, the probability that $u$ proposes to $v_{t'}$) is $x_{e_{t'}}$.
  Then, defining $α_{t,u}:=\sum_{u\ni e,e<(v_t,u)}x_e$ as computed at Line~\ref{line:edgealg-alphau}, the probability that $u$ is dead at the start of iteration $t$ is exactly $α_{t,u}$, since $u$ is marked dead as soon as it makes its first (and thus only) proposal.

  Whether $e_t$ is realized and whether $b=1$ at iteration $t$ both occur independently of what has occurred so far, and with probabilities $\frac{x_{e_t}}{p_{e_t}(1-α_{t,u})}$ and $p_{e_t}$ respectively.
  Thus, the probability that all three conditions from Line~\ref{line:edgealg-cond}, and that $u$ proposes to $v_t$, is exactly $x_{e_t}$.
\end{proof}
Next, we observe that for a fixed $v\in A$, the proposals received from its neighbors $u\in B$ are independent.
\begin{lemma}\label{lem:edge-proposal-independence}
  For an edge $e_t\in E$, let $P_{e_t}$ denote the event that in iteration $t$, the condition at Line~\ref{line:edgealg-cond} holds.
  Then for any $v\in A$, the events $\cu{P_{(v,u)}:(v,u)\in E}$ are independent.
\end{lemma}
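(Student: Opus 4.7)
The plan is to exhibit, for each $u \in B$, a collection $\mc{R}_u$ of primitive random variables such that $P_{(v,u)}$ is determined by $\mc{R}_u$ alone, and then observe that these collections are pairwise disjoint across distinct $u$'s. Mutual independence of the $P_{(v,u)}$ will then follow from the mutual independence of all the underlying primitive random variables (edge realizations and per-iteration Bernoullis).

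For each $u \in B$, let $\mc{R}_u$ consist of the realization indicator of every edge incident to $u$, together with the Bernoulli $b$ sampled at the iteration processing each such edge. I would prove by induction on $t$ that whether $u$ is alive at the start of iteration $t$ is a function of $\mc{R}_u$ only: by the algorithm, $u$ is marked dead at the first iteration $t' < t$ (if any) such that $e_{t'}$ is incident to $u$ and the three conditions of Line~\ref{line:edgealg-cond} all hold. By the inductive hypothesis, the aliveness of $u$ at time $t'$ depends only on $\mc{R}_u$, and the realization of $e_{t'}$ together with the $b$-sample at that step lie in $\mc{R}_u$ by construction. Since the realization of $e_t$ and the Bernoulli $b$ at iteration $t$ are also in $\mc{R}_u$, the event $P_{(v,u)}$ is measurable with respect to $\mc{R}_u$.

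The crucial design feature being exploited here is that Line~\ref{line:edgealg-cond} is phrased purely in terms of $u$ and the current iteration; the status of $v$ is never consulted when deciding whether $P_{(v,u)}$ holds (it only gates whether the match is actually \emph{recorded}, at Line~\ref{line:jhrhgjre}, and whether $v$ gets killed). This is what lets events sharing the endpoint $v$ still decouple.

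To conclude, fix $v \in A$ and consider any two distinct neighbors $u_1 \ne u_2$ of $v$. Because $G$ is bipartite with $u_1,u_2 \in B$, no edge is incident to both, so $\mc{R}_{u_1}$ and $\mc{R}_{u_2}$ are disjoint subsets of the underlying primitive variables; the same holds for any finite subcollection of $u$'s. Since all edge realizations and all per-iteration Bernoullis are mutually independent, functions of pairwise disjoint subcollections are mutually independent, and the claim follows. The only real obstacle is the induction establishing that the apparently global notion of ``$u$ is alive at time $t$'' in fact depends only on $\mc{R}_u$; everything else is bookkeeping.
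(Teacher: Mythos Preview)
Your proposal is correct and is essentially the same argument as the paper's: both identify, for each $u\in B$, the set of primitive random inputs (edge realizations and per-iteration Bernoullis on edges incident to $u$) that determine $u$'s aliveness and hence $P_{(v,u)}$, and then observe these input sets are disjoint across distinct $u$'s. Your write-up is slightly more explicit in naming the sets $\mc{R}_u$ and invoking bipartiteness to justify disjointness, but the structure and key idea are identical.
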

\begin{proof}
  Let $e_t=(v,u)$.
  We observe that $P_{(v,u)}$ depends only on randomness from iteration $t$ (whether $e_t$ was realized, the value of $b$), as well as whether $u$ is alive or not.
  However, the aliveness of $u$ itself depends on these same variables from the previous edge incident to $u$ processed by the algorithm (or $u$ is deterministically alive if $e_t$ is the first such edge).
  Thus, inducting over all such edges, whether $u$ is alive or not depends only on realizations of edges and values of $b$ from iterations $t'$ where $e_{t'}=(v',u)$ for some $v'$.
  Importantly, $P_{(v,u)}$ is a deterministic function of these random inputs, which are importantly disjoint from $P_{(v,u')}$ for other $u'\ne u$.
\end{proof}

Now that we know that the proposals are well-behaved, we can prove our main result for edge arrivals.

\begin{theorem}\label{thm:main2}
  Algorithm~\ref{edge-alg} outputs a $1/2$-approximate matching, that is
  $$\sum_{e\in E} w_e\E[\alg(e)] \geq \opt/2.$$
\end{theorem}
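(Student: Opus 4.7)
The plan is to establish the per-edge identity $\Pr[e\in M]=x_e/2$ for every $e\in E$. Summing and invoking Observation~\ref{obs:optlp-edge} then gives
\[
\sum_{e\in E} w_e\,\E[\alg(e)] \;=\; \frac{1}{2}\sum_{e\in E} w_e x_e \;=\; \frac{\lpopt}{2} \;\geq\; \frac{\opt}{2},
\]
which is the conclusion of the theorem.

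I would prove the per-edge identity by induction on the iteration index $t$. Fix $e_t=(v,u)$. The edge $e_t$ enters $M$ exactly when three things happen simultaneously: (i) the outer condition $P_{e_t}$ at Line~\ref{line:edgealg-cond} holds; (ii) $v$ is still alive at the start of iteration $t$; and (iii) the fresh coin flip at iteration $t$ with success probability $\tfrac{1}{2-\alpha_v}$ succeeds. Lemma~\ref{lem:edge-proposal-prob} delivers $\Pr[P_{e_t}]=x_{e_t}$. Since $v$ is marked dead only upon being matched and hence matched at most once, the events $\{(v,u')\in M\}$ for distinct $u'$ are disjoint, so applying the inductive hypothesis to each $(v,u')<e_t$ yields
\[
\Pr[v \text{ alive at iter } t] \;=\; 1 - \!\!\sum_{(v,u')<e_t}\!\!\Pr[(v,u')\in M] \;=\; 1 - \frac{\alpha_v}{2} \;=\; \frac{2-\alpha_v}{2}.
\]
Granting mutual independence of the three events, multiplying gives $\Pr[e_t\in M] = x_{e_t}\cdot\tfrac{2-\alpha_v}{2}\cdot\tfrac{1}{2-\alpha_v} = x_{e_t}/2$, closing the induction; the base case $t=1$ is the same calculation with $\alpha_u=\alpha_v=0$.

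The main obstacle is verifying this independence, which is precisely where bipartiteness plays a role. The $\tfrac{1}{2-\alpha_v}$ coin, the realization of $e_t$, and the Bernoulli $b$ are all drawn fresh at iteration $t$ and hence independent of everything prior. The only remaining ingredient of $P_{e_t}$ is the aliveness of $u$, which is a function solely of the realizations and Bernoulli's from earlier iterations whose edge is incident to $u$, while the event ``$v$ is alive at iteration $t$'' is a function solely of the randomness from earlier iterations whose edge is incident to $v$ (including the $\tfrac{1}{2-\alpha_v}$ coins used there). Because $G$ is bipartite, no edge other than $e_t$ is incident to both $u$ and $v$, so these two families of earlier iterations are disjoint and measurable with respect to disjoint independent random variables; hence the aliveness of $u$ and the aliveness of $v$ at iteration $t$ are independent. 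This is exactly the fact that allows the algorithm's $\tfrac{1}{2-\alpha_v}$ acceptance threshold to cancel the $\tfrac{2-\alpha_v}{2}$ survival probability and produce the clean identity $\Pr[e\in M]=x_e/2$ needed for the $1/2$-approximation.
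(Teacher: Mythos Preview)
Your proof follows the same plan as the paper's: show $\Pr[e_t\in M]=x_{e_t}/2$ for every edge by induction on $t$, then sum and invoke Observation~\ref{obs:optlp-edge}. The computation $(1-\alpha_v/2)\cdot x_{e_t}\cdot\tfrac{1}{2-\alpha_v}=x_{e_t}/2$ and the inductive handling of $\Pr[v\text{ alive}]$ match the paper exactly.

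There is, however, a gap in your independence argument. You assert that the event ``$v$ is alive at iteration $t$'' is a function solely of the randomness from earlier iterations whose edge is incident to $v$. This is not quite right: for $v$ to die at some earlier iteration $t'$ with $e_{t'}=(v,u')$, the proposal event $P_{e_{t'}}$ must hold, and that event requires $u'$ to be alive at time $t'$, which in turn depends on realizations and Bernoulli draws from iterations incident to $u'$ that need not be incident to $v$ at all. So $v$'s aliveness genuinely depends on randomness from iterations that do not touch $v$, and your ``two families of earlier iterations'' are not the ones you named.

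The independence you need is still true, but the clean justification is the paper's, via Lemma~\ref{lem:edge-proposal-independence}: for fixed $v$, the events $\{P_{(v,u')}:(v,u')\in E\}$ are mutually independent, because each $P_{(v,u')}$ is measurable with respect to randomness from iterations whose edge is incident to $u'$, and distinct $u'\in B$ give disjoint such sets (every edge has exactly one $B$-endpoint). Since ``$v$ alive at $t$'' is a deterministic function of $\{P_{(v,u')}:(v,u')<e_t\}$ together with the fresh acceptance coins, it is independent of $P_{(v,u)}=P_{e_t}$. Your bipartiteness observation is doing the right work, just at the level of the $B$-endpoints $u'$ rather than at $v$ itself; once you reroute the argument through Lemma~\ref{lem:edge-proposal-independence} (or equivalently, correctly describe the randomness on which $v$'s aliveness depends), the proof is complete and coincides with the paper's.
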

\begin{proof}
  We have already noted in Section~\ref{sec:edge-rounding} that Algorithm~\ref{edge-alg} outputs a correct matching.
  It thus suffices to prove that each edge $e=(v,u)$ is added to $M$ with probability $x_e/2$, by Observation~\ref{obs:optlp-edge}.
  For a given $v$, we prove this by induction over all edges incident to $v$, in order of arrival.

  Fix some $t\ge1$ where $e_t=(v,u)$.
  Suppose that any $e_{t'}=(v,u')$ with $t'<t$ is added with probability $x_{e_{t'}}/2$.
  Then, at time $t$, the probability that $v$ is already marked dead (equivalently, that an edge incident to $v$ has been added to $M$) is exactly $\sum_{e\ni v,e<t}x_e/2=α_v/2$, as these are disjoint events.
  By Lemma~\ref{lem:edge-proposal-independence}, the proposals to $v$ were independent, so even conditioned on $v$ being still alive, the probability that $v$ receives a proposal from $u$ is as given by Lemma~\ref{lem:edge-proposal-prob}, namely $x_e$.
  Thus, the probability that $v$ is alive and proposed to by $u$, and $(v,u)$ is then added to $M$, is
  \[(1-α_v/2)\cdot x_e\cdot\frac1{2-α_v}=x_e/2 . \qedhere \]
\end{proof}

\section{Proof of Lemma~\ref{lemma:incl-excl}}\label{apx:omitted}
\begin{proof}
  We begin by considering the events $E$ and $F$.
  Throughout, we assume a single fixed $t$, and drop it from the subscripts.
  First, for any $X\subset B$, the set of events $E^{X\sm Y}\cap F^Y$ over $Y\subset X$ partition the probability space.
  In particular, we get the identity
  \begin{align*}
    \sum_{Y\subset X}\Pr\s*{E^{X\sm Y}\cap F^Y}
    &= 1
      .
  \end{align*}
  Even better, this same identity holds when we condition all probabilities by an arbitrary event, so for $X\subset S\subset B$ we have
  \begin{align}
    \label{eq:incl-excl-sum2}
    \sum_{Y\subset X}\Pr\s*{E^{S\sm Y}\cap F^Y}
    = \sum_{Y\subset X}\Pr\s*{E^{S\sm X}}\Pr\s*{E^{X\sm Y}\cap F^Y\mid E^{S\sm X}}
    &= \Pr\s*{E^{S\sm X}}
  \end{align}
  by conditioning on $E^{S\sm X}$.

  Letting $S\subset B$ and $w\in\mathbb R^S$ be arbitrary, we now get
  \begin{align*}
    &\phantom{{}={}} \sum_{X\subset S}\Pr\s*{E^{S\sm X}}\p*{\prod_{u\in X}w_u}\prod_{u\in S\sm X}(1-w_u) \\
    &= \sum_{X\subset S}\p*{\sum_{Y\subset X}\Pr\s*{E^{S\sm Y}\cap F^Y}}\p*{\prod_{u\in X}w_u}\prod_{u\in S\sm X}(1-w_u) \tag{Equation~\ref{eq:incl-excl-sum2}} \\
    &= \sum_{Y\subset S}\Pr\s*{E^{S\sm Y}\cap F^Y}\sum_{Y\subset X\subset S}\p*{\prod_{u\in X}w_u}\prod_{u\in S\sm X}(1-w_u) \\
    &= \sum_{Y\subset S}\Pr\s*{E^{S\sm Y}\cap F^Y}\p*{\prod_{u\in Y}w_u}\sum_{Y\subset X\subset S}\p*{\prod_{u\in X\sm Y}w_u}\prod_{u\in S\sm X}(1-w_u) \\
    &= \sum_{Y\subset S}\Pr\s*{E^{S\sm Y}\cap F^Y}\p*{\prod_{u\in Y}w_u}\prod_{u\in S\sm Y}(w_u+(1-w_u)) \\
    &= \sum_{Y\subset S}\Pr\s*{E^{S\sm Y}\cap F^Y}\p*{\prod_{u\in Y}w_u}
      . \qedhere \end{align*}
\end{proof}

\newpage
\bibliographystyle{plain}
\bibliography{refs}

\appendix

\end{document}